\newtheorem{mylem}{Lemma}
\newtheorem{mydef}{Definition}
\newtheorem{myexam}{Example}
\newtheorem{mythm}{Theorem}
\begin{document}
%
\title{Earned Benefit Maximization in Social Networks Under Budget Constraint}
%
%
%

\author{Suman Banerjee,
        Mamata~Jenamani,~\IEEEmembership{Member,~IEEE,}
        Dilip Kumar Pratihar,~\IEEEmembership{Senior~Member,~IEEE}
\IEEEcompsocitemizethanks{\IEEEcompsocthanksitem Dr. Suman Banerjee is with the Department of Computer Science and Engineering, Indian Institute of Technology,Gandhinagar-382355, India. E-mail: suman.b@iitgn.ac.in
\IEEEcompsocthanksitem Prof. Mamata Jenamani is with the Department of Industrial and Systems Engineering, Indian Institute of Technology,Kharagpur-721302, India. Email: mj@iem.iitkgp.ac.in
\IEEEcompsocthanksitem Prof. Dilip Kumar Pratihar is with the Department
of Mechanical Engineering, Indian Institute of Technology,Kharagpur-721302, India.}
\thanks{Manuscript received April 19, 2005; revised August 26, 2015.}}

%
%

\markboth{Journal of \LaTeX\ Class Files,~Vol.~14, No.~8, August~2015}%
{Shell \MakeLowercase{\textit{et al.}}: Bare Demo of IEEEtran.cls for IEEE Journals}
%



\maketitle

\begin{abstract}
Given a social network with nonuniform selection cost of the users, the problem of \textit{Budgeted Influence Maximization} (BIM in short) asks for selecting a subset of the nodes within an allocated budget for initial activation, such that due to the cascading effect, influence in the network is maximized. In this paper, we study this problem with a variation, where a set of nodes are designated as target nodes, each of them is assigned with a benefit value, that can be earned by influencing them, and our goal is to maximize the earned benefit by initially activating a set of nodes within the budget. We call this problem as the \textsc{Earned Benefit Maximization Problem}. First, we show that this problem is NP\mbox{-}Hard and the benefit function is \textit{monotone}, \textit{sub\mbox{-}modular} under the \textit{Independent Cascade Model} of diffusion. We propose an incremental greedy strategy for this problem and show, with minor modification it gives $(1-\frac{1}{\sqrt{e}})$\mbox{-}factor approximation guarantee on the earned benefit. Next, by exploiting the sub\mbox{-}modularity property of the benefit function, we improve the efficiency of the proposed greedy algorithm. Then, we propose a hop\mbox{-}based heuristic method, which works based on the computation of the `expected earned benefit' of the effective neighbors corresponding to the target nodes. Finally, we perform a series of extensive experiments with four real\mbox{-}life, publicly available social network datasets. From the experiments, we observe that the seed sets selected by the proposed algorithms can achieve more benefit compared to many existing  methods. Particularly, the hop\mbox{-}based approach is found to be more efficient than the other ones for solving this problem.
\end{abstract}

\begin{IEEEkeywords}
Social Network, Earned Benefit, Target Nodes, Greedy Algorithm, Effective Nodes.
\end{IEEEkeywords}

%
\IEEEpeerreviewmaketitle

\section{Introduction}
\textit{Social Networks} are an interconnected structure among a group of agents \cite{abraham2009computational}. These are effective platforms, where word\mbox{-}of\mbox{-}mouth effect happens at a large scale and \textit{information}, \textit{ideas}, \textit{rumors} etc.  disseminates widely and rapidly \cite{chierichetti2011rumor}, \cite{kleinberg2008convergence}. This phenomenon has been exploited by the E\mbox{-}Commerce houses for promoting their brands among people \cite{dinh2014cost}, \cite{bagherjeiran2008combining}. The key problem is that which users initially to choose for initiating the diffusion process such that the influence in the network gets maximized. Formally this problem is called as the \emph{Social Influence Maximization Problem} \cite{kempe2003maximizing}. Due to the wider applications in different domains such as \emph{viral marketing} \cite{chen2010scalable}, \textit{social recommendation} \cite{ye2012exploring}, \textit{market basket analysis} \cite{monteserin2018influence}, \textit{prediction of hot topics} \cite{jiang2018predicting} this problem has been studied in different variations. Please, look into \cite{li2018influence} \cite{banerjee2018survey} for recent surveys. The social influence happens due to the cascading process in the underlaying network \cite{li2012diffusion, guille2013information}, and this has huge impact, because human decisions from personal (which place to visit and which restaurant to explore? ) to political (which political party to vote in the coming election?) are influenced by their neighbors, at least to some extent. To study the diffusion process in a social network several diffusion models have been studied. Please look into \cite{li2017survey} for recent survey.  
 \par One of the recently introduced variants of the SIM Problem is the problem of \textit{Budgeted Influence Maximization}  \cite{nguyen2013budgeted}. This problem assumes that users of the network have nonuniform selection cost, which signifies the amount of incentive need to be paid if a user is selected as a seed node. A fixed amount of budget is allocated for the seed set selection process, and the job is to choose highly influential seed nodes within the budget to maximize the influence. There are a few solution methodologies available in the literature for this problem, such as \textit{directed acyclic graph}\mbox{-}based heuristic by Nguyen et al. \cite{nguyen2013budgeted}, \textit{Sample Average Aggregate Scheme} by Guney et al. \cite{guney2017optimal}, ComBIM by Banerjee et al. \cite{banerjee2019combim}. In all these studies, it is implicitly assumed that the influencing each user is equally important, though commercial campaigns are targated in nature, which means a specific brand is to be advertised towards a specific set of users. Because, advertising a brand towards a set of people who do not have any interest towards it does not make any sence. On the other hand, there are some studies in the literature that consider the target user in the influence maximization process \cite{li2015real} \cite{mochalova2014targeted} \cite{nguyen2016cost}. To the best of our knowledge, none of the targeted influence maximization studies considers non\mbox{-}uniform selection cost of the users. 
 \par In target advertisement scenarios, influencing different target user leads to the different amount of benefit and from the advertisers perspective, the main goal is to maximize the total earned benefit. Motivated by this practical scenario, we study the \textsc{Earned Benefit Maximization Problem} (EBM Problem), where the target users are associated with a benefit value, users are associated with a selection cost, and a fixed budget is given. The goal is to select a seed set within the budget to maximize the earned benefit. There are previous studies on this problem by us. In \cite{banerjee2019maximizing}, we came up with a integer programming formulation for this problem. In \cite{banerjee2018priority, banerjee2019maximizin}, we proposed a ranking approach with the exploitation of community structure for this problem. However, in this study our approach to this problem is very different. We start by studying the properties of the earned benefit function and propose a number of solutions followed by experiments. 
Particularly, we make the following contributions in this paper.
 \begin{itemize}
 \item We extend the BIM Problem by considering the notion of target users with non\mbox{-}uniform benefit values and propose the `Earned Benefit Maximization Problem'.
 \item For the EBM Problem, we propose an `incremental greedy strategy' and show with minor modification, this methodology leads to $(1- \frac{1}{\sqrt{e}})$ factor approximation gurrantee on the earned benefit.
 \item We show that the benefit function is \textit{monotone} and \textit{sub\mbox{-}modular} under IC Model of diffusion and exploit this property to improve the efficiency of the incremental greedy algorithm.
 \item Using the concept of `expected earned benefit' of a node, we propose an efficient hop\mbox{-}based heuristic for solving this problem.
 \item We conduct a set of extensive experiments with four real\mbox{-}world publicly available social network datasets for showing the effectiveness and efficiency of the proposed methodologies.
 \end{itemize}
 Rest of the paper is organized as follows. Section \ref{Sec:RW} describes some of the recent studies from the  literature. Section \ref{Sec:BPD} presents preliminary definitions, formally defines the EBM Problem and state its hardness result. The proposed methodologies for solving this problem have been described in Section \ref{Sec:PM}. Section \ref{Sec:EE} contains the experimental evaluations of the proposed methodologies, and finally, in Section \ref{Sec:CFD}, we conclude this study and give the future directions.

\section{Related Work} \label{Sec:RW}
In this section, we present some closely related studies from the literature. This study closely related with the SIM Problem and its variants and more particularly for the targeted users.
\paragraph{Social Influence Maximization and Its Variants} Given a social network of users, which nodes should be chosen for initially injecting the information that causes the maximum influence in the network? This problem is known as the social influence maximization. Initially, this problem was identified in the context of viral marketing  by Damingos and Rechardson \cite{richardson2002mining}. However, Kempe et al. \cite{kempe2003maximizing} were the first to investigate the computational issues of this Problem and proved that it is NP\mbox{-}Hard and proposed an \textit{incremental greedy algorithm}, which admits $(1-\frac{1}{e})$\mbox{-}factor approximation ratio. Their study triggers a vast amount of research on the SIM Problem and hence, a plenty of solution methodologies are available in the literature, such as \textit{Cost\mbox{-}Effective Lazy Forward (CELF)} \cite{leskovec2007cost}, \textit{CELF++} \cite{goyal2011celf++}, \textit{SIMPATH} \cite{goyal2011simpath}, \textit{Two\mbox{-}Phase Influence Maximization} (\textit{$TIM$}) \cite{tang2014influence}, \textit{Influence Maximization Via Martingales} (\textit{IMM}) \cite{tang2015influence}, \textit{Influence Ranking and Influence estimation} (\textit{IRIE}) \cite{jung2012irie}, different \emph{community\mbox{-}based solution methodologies}  \cite{shang2017cofim} \cite{li2018community}, different non\mbox{-}traditional optimization algorithms such as \textit{genetic algorithm} \cite{zhang2017maximizing}, \textit{discrete particle swarm optimization} \cite{wang2017discrete} and many more. Also, there are several variants of this problem studied in the literature, such as $\lambda$\mbox{-}coverage problem \cite{narayanam2011shapley}, budgeted influence maximization problem \cite{nguyen2013budgeted, banerjee2019combim} and many more.
\paragraph{Social Influence Maximization for the Targeted Users} Recently, the problem of influence maximization for the targeted users in the social network has been addressed by the researchers. Li et al. \cite{li2015real} studied this problem, and they considered target users, who are relevant to a particular keyword. There solution methodology for this problem was based on the construction of \textit{reverse influence set} and its indexing. Song et al. \cite{song2016targeted} addressed the targeted influence maximization problem by considering the geographical location of the users and the time deadline within which the users should be influenced. Recently, Wen et al. \cite{wen2018maximizing} studied this problem focusing on mainly two issues: how to capture the social influence among the target user and develop an  efficient scheme that can offer the wider influence spread among the target users. Wang et al. \cite{wang2018targeted} also solved the same problem by  considering the impact of budget on the influence spread and incorporating  efficient sampling techniques. 
\par However, to the best of the authors' knowledge, none of the existing studies on targeted influence maximization problem considers the nonuniform benefit associated with each target users and the nonuniform selection cost of users. In this paper, we study the \textsc{Earned Benefit Maximization Problem}, where the target users are associated with non\mbox{-}uniform benefit value and non\mbox{-}uniform selection cost of the users.


\section{Background and Problem Definition} \label{Sec:BPD}
 We consider that the social network is represented as a weighted graph $\mathcal{G}(V, E, \mathcal{P})$, where the \textit{vertex set}  $V(\mathcal{G})$ is the set of users of the network and the \textit{edge set}, $E(\mathcal{G})$ represents the set of \textit{social ties} among the users.  $\mathcal{P}$ is the edge weight function that assigns each edge to its influence probability, i.e., $\mathcal{P}:E(\mathcal{G})\longrightarrow (0,1]$. For any edge $(u_iu_j)\in E(\mathcal{G})$, we denote its influence probability as $\mathcal{P}_{u_i \rightarrow u_j}$. This signifies the probability that the user $u_i$ will be able to influence $u_j$. We denote the number of nodes and edges of $\mathcal{G}$ by $n$ and $m$, respectively. Next, we briefly describe the Independent Cascade Model, which we consider as the underlying diffusion model in our study. 
\subsection{The Independent Cascade Model}
The Independent Cascade Model is one of the models, which has been predominantly used in influence maximization literature \cite{kempe2003maximizing, tang2014influence, tang2015influence}. Here, the diffusion of information starts from a set of nodes selected initially and known as the \textit{seed nodes}. All the nodes of the network are ignorant of the information and the seed nodes are informed at time $t=0$. Now, from these seed nodes, information is diffused by the following rules:
\begin{itemize}
\item information is diffused in discrete time steps,
\item a node can be either one of the two states: `active' (`influenced') or inactive (`uninfluenced'),
\item a node can change its state from inactive to active however, not the vice-versa. 
\item once a node is influenced, it will remain in this state.
\end{itemize}
 Each active node (say, $u_i$) at current time stamp (say $t$) will get a chance to activate its currently \textit{inactive} neighbors ($u_j \in \mathcal{N}(u_i)$ and $u_j$ is inactive) with probability as their edge weight. If any one of them succeeds, then $u_j$ will become an active node at time $t+1$. Only the recently active node can take part in the triggering process. This process stops, when no more node activation is possible. Next, we introduce the Earned Benefit Maximization Problem.
 \subsection{The Earned Benefit Maximization Problem}
In this problem, along with the social network $\mathcal{G}(V, E, \mathcal{P})$, a subset of the users $\mathcal{D}$ is given as the target users. Each of them is associated with a benefit, which can be earned by influencing the corresponding target user. This can be characterized by the \emph{benefit function} $b: \mathcal{D}\longrightarrow \mathbb{R}^{+}$. For any $u \in \mathcal{D}$ his benefit is denoted as $b(u)$ and for any $u \notin \mathcal{D}$, $b(u)=0$. 
 For a seed set $\mathcal{S}$, the set of nodes influenced by it is denoted by $I(\mathcal{S})$. As the diffusion of information under IC Model is a probabilistic process, the influence of a seed set is measured in terms of expectation. Hence, the number of influenced nodes due to the seed set $\mathcal{S}$ is $\sigma(\mathcal{S})=\mathbb{E}[\vert I(\mathcal{S})\vert]$, where, $\sigma(.)$ is the \emph{social influence function} \cite{kempe2003maximizing}. Now, the \emph{earned benefit} by the seed set $\mathcal{S}$ is defined as $\beta(\mathcal{S})=\underset{u \in \mathcal{D} \cap I(\mathcal{S})}{\sum}b(u)$. Here, $\beta(.)$ is the \emph{earned benefit function}, that maps each subset of the nodes to is expected earned benefit value, i.e., $\beta:2^{V(\mathcal{G})} \longrightarrow \mathbb{R}_{\geq 0}$.
 \par In real\mbox{-}world campaigns, earned benefit maximization is done by conducting an information diffusion process. As the real\mbox{-}life social networks are formed by the rational human agents, if a user is selected as seed, incentivization is required. This can be characterized by the \emph{cost function} $\mathcal{C}: V(\mathcal{G}) \longrightarrow \mathbb{R}^{+}$. Selection cost associated with the node $u$ is denoted as $\mathcal{C}(u)$. For a subset of nodes $\mathcal{S}$, their selection cost is denoted as $\mathcal{C}(\mathcal{S})=\underset{u \in \mathcal{S}}{\sum}\mathcal{C}(u)$, and a fixed amount of budget $\mathcal{B}$ is allocated for seed set selection. Hence, the problem here is to choose a subset $\mathcal{S}$ from $V(\mathcal{G})$ to maximize the function $\beta(\mathcal{S})$ subject to the constraint $\mathcal{C}(\mathcal{S})\leq \mathcal{B}$. Formally, the problem can be expressed as follows:
\begin{tcolorbox}

\underline{\textsc{Earned Benefit Maximization Problem}} \\
\textbf{Input:} Social Network $\mathcal{G}(V, E, \mathcal{P})$, Target Nodes $\mathcal{D}$, Cost Function $\mathcal{C}$, Benefit Function $b$, and Budget $\mathcal{B}$.

\textbf{Problem:} Find out the seed set ($\mathcal{S}$) such that $\underset{u \in \mathcal{S}}{\sum} \mathcal{C}(u) \leq \mathcal{B}$ and for any other seed set $\mathcal{S}^{'}$ with $\underset{v \in \mathcal{S}^{'}}{\sum} \mathcal{C}(v) \leq \mathcal{B}$, $\beta(\mathcal{S})\geq \beta(\mathcal{S}^{'})$.
\end{tcolorbox} 
 
 The EBM Problem is basically the generalization of the BIM Problem, which is NP\mbox{-}Hard under the IC Model of diffusion \cite{nguyen2013budgeted}. Hence, Theorem \ref{Th:1} holds.
\begin{mythm} \label{Th:1}
The EBM Problem is NP\mbox{-}Hard under Independent Cascade Model of diffusion. 
\end{mythm}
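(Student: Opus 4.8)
The plan is to establish NP\mbox{-}Hardness by exhibiting the \textsc{Earned Benefit Maximization Problem} as a strict generalization of the \textit{Budgeted Influence Maximization} (BIM) Problem, which is already known to be NP\mbox{-}Hard under the IC Model \cite{nguyen2013budgeted}. Concretely, I would give a polynomial\mbox{-}time reduction from BIM to EBM; since NP\mbox{-}Hardness is preserved under such reductions, the result follows.

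The key steps, in order, are as follows. First, fix the decision versions: the decision version of EBM takes an instance $(\mathcal{G}, \mathcal{D}, \mathcal{C}, b, \mathcal{B})$ together with a rational threshold $\eta$ and asks whether some $\mathcal{S}$ with $\mathcal{C}(\mathcal{S}) \leq \mathcal{B}$ satisfies $\beta(\mathcal{S}) \geq \eta$; the decision version of BIM is the analogue with the social influence condition $\sigma(\mathcal{S}) \geq \eta$ in place of the benefit condition. Second, given an arbitrary BIM instance --- a network $\mathcal{G}(V, E, \mathcal{P})$, a cost function $\mathcal{C}$, a budget $\mathcal{B}$, and a threshold $\eta$ --- construct an EBM instance on the same graph $\mathcal{G}$, with the same $\mathcal{C}$ and $\mathcal{B}$, by setting $\mathcal{D} = V(\mathcal{G})$ and $b(u) = 1$ for every $u \in V(\mathcal{G})$; this is computable in time linear in the input size. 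Third, verify the objective correspondence: for every seed set $\mathcal{S}$ we have $\mathcal{D} \cap I(\mathcal{S}) = I(\mathcal{S})$, hence $\beta(\mathcal{S}) = \mathbb{E}[\,\sum_{u \in I(\mathcal{S})} 1\,] = \mathbb{E}[|I(\mathcal{S})|] = \sigma(\mathcal{S})$, so the budget\mbox{-}feasible seed sets and their objective values agree across the two instances. Therefore the EBM instance answers affirmatively for $\eta$ exactly when the original BIM instance does, completing the reduction.

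I do not expect a genuine technical obstacle, since this is a containment argument rather than a gadget construction; the points that require care are mainly bookkeeping: ensuring the reduction runs from BIM \emph{into} EBM (so that the known hardness transfers in the correct direction), checking that restricting the target set to all of $V(\mathcal{G})$ with unit benefits yields a legitimate EBM instance, and confirming that the identity $\beta(\mathcal{S}) = \sigma(\mathcal{S})$ holds at the level of expectations, which is immediate from the definitions of $\sigma(\cdot)$ and $\beta(\cdot)$ given in Section \ref{Sec:BPD}. As an alternative self\mbox{-}contained route, one could instead reduce directly from \textsc{Set Cover} in the style of Kempe et al. \cite{kempe2003maximizing}, but the reduction from BIM is shorter and invokes precisely the fact the paper already cites.
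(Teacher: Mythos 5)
Your proposal is correct and follows essentially the same route as the paper: the paper's proof is simply the observation that EBM generalizes the NP\mbox{-}Hard BIM Problem \cite{nguyen2013budgeted}, and your reduction (taking $\mathcal{D} = V(\mathcal{G})$, $b \equiv 1$, so that $\beta(\mathcal{S}) = \sigma(\mathcal{S})$) is just the explicit formalization of that containment. Your added bookkeeping with decision versions and the verification $\beta = \sigma$ is a more careful write\mbox{-}up of the same argument, not a different approach.
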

This result motivates us to design suitable approximation algorithm and heuristic solution for this problem. We discuss them in the next Section.
\section{Proposed Methodology} \label{Sec:PM}
In this section, we present our proposed methodologies for the EBM  problem. Prior to that, we establish two properties of the benefit function, which will be used subsequently.
\subsection{Properties of the Benefit Function}
As mentioned previously, the benefit earned by a given seed set $\mathcal{S}$ is defined as $\beta(\mathcal{S})=\underset{u \in \mathcal{D} \cap I(\mathcal{S})}{\sum} b(u)$. So, the benefit function can be thought of a set function, which is defined on the ground set $V(\mathcal{G})$, i.e., $\beta: 2^{V(\mathcal{G})} \longrightarrow \mathbb{R}^{+}$. Now, we prove two important properties of the benefit function, namely \textit{monotonicity} and \textit{sub\mbox{-}modularity}. This two properties are exploited for proving the approximation guarantee of Algorithm \ref{Algo:1a}.
 \begin{mydef}[Non\mbox{-}negativity and Monotonicity of Set Function]
 A set function $f(.)$ defined over the ground set $V(\mathcal{G})$ is said to be non\mbox{-}negative if $\forall \mathcal{S} \subseteq V(\mathcal{G})$, $f(\mathcal{S}) \geq 0$ and monotone if $ \forall \mathcal{S} \subseteq \mathcal{T}$, $f( \mathcal{S}) \leq f( \mathcal{T})$.
 \end{mydef}
 \begin{mylem}
 The benefit function, $\beta(.)$ is non\mbox{-}negative and monotone under IC Model of diffusion.
 \end{mylem}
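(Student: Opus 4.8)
The plan is to dispatch non-negativity directly from the definition and to obtain monotonicity via the standard \emph{live-edge} (coin-flipping) reformulation of the Independent Cascade Model.

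For non-negativity, observe that for every $u \in V(\mathcal{G})$ we have $b(u) \geq 0$ (indeed $b(u) > 0$ for $u \in \mathcal{D}$ and $b(u) = 0$ otherwise). Hence, for any fixed realization of the cascade, the quantity $\sum_{u \in \mathcal{D} \cap I(\mathcal{S})} b(u)$ is a sum of non-negative terms and is therefore non-negative; taking the expectation over the randomness of the diffusion preserves this, so $\beta(\mathcal{S}) \geq 0$ for every $\mathcal{S} \subseteq V(\mathcal{G})$.

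For monotonicity, I would first recall that a run of the IC Model can be generated equivalently by independently declaring each edge $(u_iu_j)$ \emph{live} with probability $\mathcal{P}_{u_i \rightarrow u_j}$ (and \emph{blocked} otherwise) \emph{before} the process starts, yielding a random subgraph $g$; conditioned on $g$, the set of nodes eventually activated from a seed set $\mathcal{S}$ equals the set $R_g(\mathcal{S})$ of vertices reachable from $\mathcal{S}$ along directed live-edge paths in $g$. Now fix any such outcome $g$ and suppose $\mathcal{S} \subseteq \mathcal{T}$. Every live-edge path originating in $\mathcal{S}$ also originates in $\mathcal{T}$, so $R_g(\mathcal{S}) \subseteq R_g(\mathcal{T})$, and hence $\mathcal{D} \cap R_g(\mathcal{S}) \subseteq \mathcal{D} \cap R_g(\mathcal{T})$. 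Since all benefit values are non-negative, enlarging the index set of the sum cannot decrease it, so $\sum_{u \in \mathcal{D} \cap R_g(\mathcal{S})} b(u) \leq \sum_{u \in \mathcal{D} \cap R_g(\mathcal{T})} b(u)$.

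Finally I would take the expectation over $g$: writing $\beta(\mathcal{S}) = \sum_g \Pr[g]\, \sum_{u \in \mathcal{D} \cap R_g(\mathcal{S})} b(u)$ and comparing it term by term with the analogous expression for $\mathcal{T}$, the inequality survives, giving $\beta(\mathcal{S}) \leq \beta(\mathcal{T})$ whenever $\mathcal{S} \subseteq \mathcal{T}$. The only genuinely delicate point is the first step of the monotonicity argument --- justifying that the dynamic, time-stepped IC process and the static live-edge process induce the same distribution over activated sets; everything afterwards is a deterministic set-theoretic observation together with linearity of expectation. I would either invoke this equivalence from Kempe et al. \cite{kempe2003maximizing} or include a brief induction on the activation time steps to keep the argument self-contained.
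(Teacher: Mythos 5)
Your proposal is correct and follows essentially the same route as the paper: both arguments reduce to the containment $\mathcal{D} \cap I(\mathcal{S}) \subseteq \mathcal{D} \cap I(\mathcal{T})$ for $\mathcal{S} \subseteq \mathcal{T}$, combined with non-negativity of the benefit values. The only difference is that the paper simply asserts $I(\mathcal{S}) \subseteq I(\mathcal{T})$ by citing the monotonicity of $\sigma(\cdot)$ from Kempe et al., whereas you justify this per-realization via the live-edge coupling and then take expectations --- a more careful rendering of the same step, since $I(\cdot)$ is a random set and the pointwise containment really does require that coupling (or an equivalent appeal to the literature).
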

\begin{proof}
 It is reported in the literature that the social influence function is non\mbox{-}negative and monotone \cite{kempe2003maximizing}. As $\forall u \in V(\mathcal{G})$, $b(u) \geq 0$, it is trivial to observe that $\forall S\subseteq V(\mathcal{G})$, $\beta(\mathcal{S}) \geq 0$. By the monotonicity property of $\sigma(.)$, $\forall \mathcal{S} \subseteq \mathcal{S}^{'}$,
 \begin{center}
 $I(\mathcal{S}) \subseteq I(\mathcal{S}^{'})$\\
 $\Rightarrow \mathcal{D} \cap I(\mathcal{S}) \subseteq \mathcal{D} \cap I(\mathcal{S}^{'})$ \\
 $\Rightarrow \underset{u \in \mathcal{D} \cap I(\mathcal{S})}{\sum} b(u)\leq \underset{u \in \mathcal{D} \cap I(\mathcal{S}^{'})}{\sum} b(u)$ \\
 $\Rightarrow \beta(\mathcal{S}) \leq \beta(\mathcal{S}^{'})$
 \end{center}
 which means $\beta(.)$ is monotone. This completes the proof.
 \end{proof}
 \begin{mydef}[Sub\mbox{-}modularity of Set Function]
 A set function $f(.)$ defined over the ground set $V(\mathcal{G})$ is said to be sub\mbox{-}modular if $\forall \mathcal{S} \subseteq \mathcal{T} \subset V(\mathcal{G})$ and $\forall u \in  V(\mathcal{G}) \setminus \mathcal{T}$, the following condition is met:
 \begin{equation}
 f(\mathcal{S} \cup \{u\}) - f(\mathcal{S}) \geq f(\mathcal{T} \cup \{u\}) - f(\mathcal{T})
 \end{equation}
 \end{mydef}
 \begin{mylem} \label{Lemma:Submodularity}
 The benefit function, $\beta(.)$ is sub\mbox{-}modular under IC Model of diffusion.
 \end{mylem}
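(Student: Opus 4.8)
The plan is to adapt the classical live\mbox{-}edge argument that Kempe et al.~\cite{kempe2003maximizing} used to establish sub\mbox{-}modularity of the influence function $\sigma(.)$. First I would invoke the standard equivalence between the IC process and the \emph{live\mbox{-}edge} (random subgraph) model: independently retain each edge $(u_iu_j)$ with probability $\mathcal{P}_{u_i \rightarrow u_j}$ and discard it otherwise, obtaining a random subgraph $X$ drawn from the product distribution over its $2^{m}$ possible realizations. Conditioned on $X$, the set of nodes activated by the IC process from a seed set $\mathcal{S}$ coincides with $R_X(\mathcal{S})$, the set of vertices reachable from $\mathcal{S}$ using the retained edges of $X$. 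Since the earned benefit of $\mathcal{S}$ depends only on which target nodes become active, this yields
\begin{equation}
\beta(\mathcal{S}) = \underset{X}{\sum}\, \Pr[X]\cdot \beta_X(\mathcal{S}), \qquad \text{where } \ \beta_X(\mathcal{S}) := \underset{w \in \mathcal{D} \cap R_X(\mathcal{S})}{\sum} b(w).
\end{equation}

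Next I would prove that $\beta_X(.)$ is sub\mbox{-}modular for each fixed realization $X$. The crucial observation is that reachability distributes over unions of sources, i.e.\ $R_X(\mathcal{S} \cup \{u\}) = R_X(\mathcal{S}) \cup R_X(\{u\})$, because a vertex lies in $R_X(\mathcal{S} \cup \{u\})$ precisely when it is reachable in $X$ from some vertex of $\mathcal{S}$ or from $u$. Hence, for $\mathcal{S} \subseteq \mathcal{T} \subset V(\mathcal{G})$ and $u \in V(\mathcal{G}) \setminus \mathcal{T}$, the marginal gain equals $\beta_X(\mathcal{S} \cup \{u\}) - \beta_X(\mathcal{S}) = \underset{w \in \mathcal{D} \cap (R_X(\{u\}) \setminus R_X(\mathcal{S}))}{\sum} b(w)$, and similarly with $\mathcal{T}$ in place of $\mathcal{S}$. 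Monotonicity of reachability gives $R_X(\mathcal{S}) \subseteq R_X(\mathcal{T})$, so $R_X(\{u\}) \setminus R_X(\mathcal{T}) \subseteq R_X(\{u\}) \setminus R_X(\mathcal{S})$; as every $b(w) \geq 0$, the gain for $\mathcal{T}$ is at most the gain for $\mathcal{S}$, which is exactly the sub\mbox{-}modularity inequality for $\beta_X$. (Equivalently, $\beta_X$ is a weighted coverage function over the ground collection $\{\,\mathcal{D} \cap R_X(\{v\})\,\}_{v \in V(\mathcal{G})}$, a textbook monotone sub\mbox{-}modular function.)

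Finally I would lift this to $\beta$ itself: since $\beta = \sum_X \Pr[X]\, \beta_X$ is a non\mbox{-}negative linear combination of the sub\mbox{-}modular functions $\beta_X$, and the class of sub\mbox{-}modular functions is closed under such combinations, the defining inequality for $\beta$ follows by summing the per\mbox{-}realization inequalities weighted by $\Pr[X]$. The point requiring the most care — the main obstacle — is justifying the live\mbox{-}edge equivalence for the \emph{earned\mbox{-}benefit} random variable rather than merely for the activated\mbox{-}set size; this is immediate once one notes that $\beta_X$ is a deterministic function of $\mathcal{D} \cap R_X(\mathcal{S})$, so the coupling of Kempe et al.\ transfers verbatim. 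A more elementary alternative, avoiding possible worlds, would be a direct threshold\mbox{-}coupling argument showing $I(\mathcal{T} \cup \{u\}) \setminus I(\mathcal{T}) \subseteq I(\mathcal{S} \cup \{u\}) \setminus I(\mathcal{S})$ realization\mbox{-}wise and then taking expectations of $\underset{w \in \mathcal{D}\cap(\cdot)}{\sum} b(w)$; I would nonetheless present the live\mbox{-}edge version as the cleaner of the two.
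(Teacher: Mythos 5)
Your proof is correct, but it takes a genuinely different route from the paper. The paper argues directly on the influenced sets: it writes $\beta(\mathcal{S}\cup\{u\})-\beta(\mathcal{S})$ via a set-theoretic identity for $I(\mathcal{S}\cup\{u\})$ in terms of $I(\mathcal{S})$ and $I(\{u\})$, cancels the common terms, and then invokes monotonicity of $I(.)$ to replace $I(\mathcal{S})\cap I(\{u\})$ by $I(\mathcal{T})\cap I(\{u\})$, all while treating $I(\cdot)$ as if it were a single deterministic set (the identity as printed is actually the symmetric difference rather than the union, and the expectation over the random diffusion is never made explicit, so the paper's argument is really a realization-wise computation stated loosely). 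Your live-edge argument makes exactly this missing structure precise: conditioning on a realization $X$, the benefit $\beta_X(\mathcal{S})$ is a weighted coverage function of the reachable sets $R_X(\{v\})$, hence monotone sub-modular, and $\beta=\sum_X \Pr[X]\,\beta_X$ inherits sub-modularity because non-negative combinations of sub-modular functions are sub-modular. What the paper's route buys is brevity and no machinery beyond the definition of $\beta$; what yours buys is rigor — it correctly separates the stochasticity (the possible-worlds decomposition) from the combinatorics (coverage), and it is the same coupling that underlies the cited result of Kempe et al.\ for $\sigma(.)$, so the transfer to weighted target benefits is immediate, as you note. Your closing remark that the only point needing care is that $\beta_X$ is a deterministic function of $\mathcal{D}\cap R_X(\mathcal{S})$ is exactly right, and your identity $R_X(\mathcal{S}\cup\{u\})=R_X(\mathcal{S})\cup R_X(\{u\})$ is the corrected form of the set identity the paper attempts to use.
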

\begin{proof}
 In the literature, it is mentioned that the social influence function, $\sigma(.)$ is sub\mbox{-}modular under the IC Model of diffusion \cite{kempe2003maximizing}. Let us assume that $\mathcal{S} \subseteq \mathcal{T} \subset V(\mathcal{G})$ and $u \in V(\mathcal{G}) \setminus \mathcal{S}$. Now, from the definition of $\beta(.)$, we have
 \begin{center}
 $\beta(\mathcal{S} \cup \{u\})-\beta(\mathcal{S})=\underset{u \in \mathcal{D} \cap I(\mathcal{S} \cup \{u\})}{\sum} b(u) - \underset{u \in \mathcal{D} \cap I(\mathcal{S})}{\sum} b(u)$
 \end{center}
 By simple set theoretic interpretation, we can write 
 \begin{center}
 $I(\mathcal{S} \cup \{u\})= (I(\mathcal{S}) \cup I( \{u\})) \setminus (I(\mathcal{S}) \cap I(\{u\}))$\\
 $\Rightarrow \mathcal{D} \cap I(\mathcal{S} \cup \{u\}) = (\mathcal{D} \cap I(\mathcal{S})) \cup (\mathcal{D} \cap I(\{u\})) \setminus (\mathcal{D} \cap (I(\mathcal{S}) \cap I(\{u\})))$
 \end{center}
 Hence, 
 \begin{center}
 $\beta(\mathcal{S} \cup \{u\})-\beta(\mathcal{S})= \underset{u \in \mathcal{D} \cap I(\mathcal{S})}{\sum} b(u) + \underset{u \in \mathcal{D} \cap I(\{u\})}{\sum} b(u) - \underset{u \in \mathcal{D} \cap (I(\mathcal{S}) \cap I(\{u\}))}{\sum} b(u) - \underset{u \in \mathcal{D} \cap I(\mathcal{S})}{\sum} b(u)$\\
 \vspace{0.3 cm}
 $\Rightarrow \beta(\mathcal{S} \cup \{u\})-\beta(\mathcal{S})= \underset{u \in \mathcal{D} \cap I(\{u\})}{\sum} b(u) - \underset{u \in \mathcal{D} \cap (I(\mathcal{S}) \cap I(\{u\}))}{\sum} b(u)$\\
\vspace{0.3 cm}
$\geq \underset{u \in \mathcal{D} \cap I(\{u\})}{\sum} b(u) - \underset{u \in \mathcal{D} \cap (I(\mathcal{T}) \cap I(\{u\}))}{\sum} b(u)$\\
\vspace{0.3 cm}
 [This is due to to the monotonicity property of $I(.)$]
 \vspace{0.3 cm}
 $=\underset{u \in \mathcal{D} \cap I(\mathcal{T} \cup \{u\})}{\sum} b(u) - \underset{u \in \mathcal{D} \cap I(\mathcal{T})}{\sum} b(u)$ \\
 \vspace{0.3 cm}
 $=\beta(\mathcal{T} \cup \{u\}) - \beta(\mathcal{T})$
 \end{center}
 We obtain the inequality required to show sub\mbox{-}modularity property of $\beta(.)$. This completes the proof.  
 \end{proof}
\subsection{Incremental Greedy Algorithm}
 Let us assume $\mathcal{S}$ be the seed set and $u \in V(\mathcal{G})\setminus \mathcal{S}$. We define the \textit{marginal gain in benefit} for the node $u$ with respect to the seed set $\mathcal{S}$ as the amount of increased benefit when the node $u$ is included in the seed set $\mathcal{S}$. Formally, it is stated in Definition \ref{Def:1}.
\begin{mydef} [Marginal Gain in Earned Benefit] \label{Def:1}
Given a seed set $\mathcal{S}$ and a node $u$, which is currently not in the seed set, i.e., $u \in V(\mathcal{G}) \setminus \mathcal{S}$, its marginal gain in the earned benefit with respect to the seed set $\mathcal{S}$ is denoted as $\Delta_{\beta}(\mathcal{S}\vert u)$ and defined as 
\begin{equation}
\Delta_{\beta}(\mathcal{S}\vert u)=\beta(\mathcal{S} \cup \{u\}) - \beta(\mathcal{S})
\end{equation}
\end{mydef}
The working principle of the proposed incremental greedy strategy is as follows. Starting with an empty seed set, this procedure incrementally selects a node within the budget that causes the maximum marginal gain per unit cost. Let $\mathcal{S}^{i}$ and $\mathcal{B}^{i}$ denote the seed set and remaining budget at the end of $i$\mbox{-}th iteration. In the $(i+1)$\mbox{-}th iteration, the node $u$ is added in the seed set $\mathcal{S}^{i}$, i.e., $\mathcal{S}^{i+1}=\mathcal{S}^{i} \cup \{u\}$, if the following condition is met.
\begin{equation}
u=\underset{v \in V(\mathcal{G}) \setminus \mathcal{S}^{i}, \mathcal{C}(v) \leq \mathcal{B}^{i}}{argmax} \frac{\Delta_{\beta}(\mathcal{S}\vert v)}{\mathcal{C}(v)}
\end{equation}  
 In an iteration, if the no seed node is selected within the remaining budget, then $u$ is null and if this happens, then the procedure is exiting. Algorithm \ref{Algo:1} states the procedure.
\begin{algorithm}[h]
 \caption{Incremental Greedy Algorithm for the EBM Problem }
 \label{Algo:1}
 \begin{algorithmic}[1]
 \renewcommand{\algorithmicrequire}{\textbf{Input:}}
 \renewcommand{\algorithmicensure}{\textbf{Output:} }
 \REQUIRE Social Network $\mathcal{G}(V, E, \mathcal{P})$, Target Nodes $\mathcal{D}$, Cost Function $\mathcal{C}$, Benefit Function $b$, and Budget $\mathcal{B}$. 
 \ENSURE   The seed set $\mathcal{S} \subseteq V(\mathcal{G})$ such that $\underset{u \in \mathcal{S}}{\sum} \mathcal{C}(u) \leq \mathcal{B}$. 
  \STATE $\mathcal{S} \leftarrow \phi$\;
   \WHILE{$\mathcal{B}>0$}
   \STATE $u\leftarrow \underset{v \in V(\mathcal{G})\setminus \mathcal{S}, \mathcal{C}(v) \leq \mathcal{B}}{argmax} \frac{\Delta_{\beta}(\mathcal{S}\vert v)}{\mathcal{C}(v)}$\;
   \IF{$u=\phi$}
	\STATE	$break$\;
	\ENDIF
\STATE	$\mathcal{S}\leftarrow \mathcal{S} \cup \{u\}$\;
 \STATE       $\mathcal{B} \leftarrow \mathcal{B}-\mathcal{C}(u)$\;
   \ENDWHILE
\STATE $return \ \mathcal{S}$\;
 \end{algorithmic} 
\end{algorithm}
Though the Algorithm \ref{Algo:1} is simple to understand, it does not give any bounded approximation guarantee on the earned benefit and we demonstrate this claim with an example.
\begin{myexam}
Let us assume, a network with $p+1$ nodes $V(\mathcal{G})=\{u, v_1, v_2, \dots, v_p\}$, where $u$ is an isolated node and the remaining nodes connected within themselves with each edge having the diffusion probability $1$. The entire vertex set of the network is the target node set, i.e., $\mathcal{D}=V(\mathcal{G})$. Benefit associated with each target node is $1$. For each $v_i$, its associated selection cost is $p$ and the selection cost of $u$ is $(1-\epsilon)$, where $0<\epsilon <1$. The allocated budget for the seed set selection process is $p$. The optimal algorithm for this problem should select any $v_i$ node and achieve the earned benefit of amount $p$ by influencing all the remaining $v_i$ nodes. However, as Algorithm \ref{Algo:1} selects the seed node based on the marginal gain in the earned benefit per unit cost, it will  select the node $u$ and not any $v_i$. For the node $u$, the value of $\frac{\Delta_{\beta}(\mathcal{S}\vert u)}{\mathcal{C}(u)}$, when $\mathcal{S}=\phi$, is $\frac{1}{1-\epsilon}$. On the other hand, for any $v_i$, the value of $\frac{\Delta_{\beta}(\mathcal{S}\vert v_i)}{\mathcal{C}(v_i)}$ is $1$. As $\frac{1}{1-\epsilon} > 1$, Algorithm \ref{Algo:1} selects the node $u$. After selecting the node $u$, the remaining budget will be $p+\epsilon-1$, which is less than $p$. Within this budget none of the $v_i$ nodes can be selected as each of them has the selection cost $p$. Hence, Algorithm \ref{Algo:1} terminates by earning the benefit $1$ and returning an  unutilized budget of amount $p+\epsilon-1$. The approximation ratio of the Algorithm \ref{Algo:1} is define as
\begin{center}
$Ratio_{Algo \ref{Algo:1}}= \frac{\text{Benefit earned by the seed set selected by Algorithm \ref{Algo:1}}}{ \text{Benefit earned by the optimal seed set}}$
\end{center}
In this example the value of $Ratio_{Algo \ \ref{Algo:1}}$ is $\frac{1}{p}$. If the value of $p$ is arbitrarily large, then the approximation ratio of Algorithm \ref{Algo:1} becomes very very less. Hence, Theorem \ref{Thm:1} holds.
\end{myexam}
\begin{mythm} \label{Thm:1}
Algorithm \ref{Algo:1} does not provide any constant approximation guarantee.
\end{mythm}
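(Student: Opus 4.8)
The plan is to argue by contradiction, leveraging the parametrized family of instances already exhibited in the example preceding the statement. Suppose, for the sake of contradiction, that Algorithm \ref{Algo:1} is an $\alpha$\mbox{-}approximation for some fixed constant $\alpha \in (0,1]$; that is, on \emph{every} instance of the EBM Problem the seed set $\mathcal{S}$ returned by Algorithm \ref{Algo:1} satisfies $\beta(\mathcal{S}) \geq \alpha \cdot \beta(\mathcal{S}^{*})$, where $\mathcal{S}^{*}$ is an optimal seed set for that instance. I would then instantiate this inequality on the network of the example, for a value of $p$ to be chosen suitably large, and derive a contradiction.

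The key steps, in order, are: (i) re-derive the behaviour of Algorithm \ref{Algo:1} on the example instance --- since $0<\epsilon<1$ we have $\tfrac{1}{1-\epsilon}>1$, so in the first iteration the per\mbox{-}cost marginal gain of the isolated node $u$ strictly exceeds that of every $v_i$, forcing the algorithm to pick $u$; (ii) note that the residual budget after this pick is $p+\epsilon-1$, which is strictly less than $p=\mathcal{C}(v_i)$ for every $i$, so in the following iteration the argmax is over an empty set, $u=\phi$, and the algorithm halts with $\mathcal{S}=\{u\}$ and $\beta(\mathcal{S})=b(u)=1$; (iii) exhibit a much better feasible solution --- selecting any single node $v_i$ costs exactly $p\leq \mathcal{B}$, and because $v_1,\dots,v_p$ induce a connected subgraph with all edge probabilities equal to $1$, this seed deterministically activates all of $v_1,\dots,v_p$, so $\beta(\{v_i\})=p$ and hence $\beta(\mathcal{S}^{*})\geq p$; (iv) combine (ii) and (iii) with the assumed guarantee to get $1=\beta(\mathcal{S})\geq \alpha\cdot\beta(\mathcal{S}^{*})\geq \alpha p$, i.e. $p\leq 1/\alpha$.

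To finish, I would observe that $1/\alpha$ is a fixed finite quantity independent of the construction, so taking any integer $p>1/\alpha$ in the example contradicts $p\leq 1/\alpha$; hence no such constant $\alpha$ exists, which is exactly the claim. Equivalently, the worst\mbox{-}case ratio of Algorithm \ref{Algo:1} over this family is $\inf_{p}\tfrac{1}{p}=0$. I do not anticipate a genuine obstacle: the only points needing a little care are the two verifications in steps (i)--(ii), namely that the greedy tie\mbox{-}break goes against \emph{every} $v_i$ and that the depleted budget genuinely blocks all further selections --- both follow solely from $0<\epsilon<1$ (and $p\geq 1$) --- while steps (iii)--(iv) are routine bookkeeping with the already\mbox{-}established example and the monotonicity of $\beta(\cdot)$.
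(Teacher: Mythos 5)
Your proposal is correct and follows essentially the same route as the paper: it uses the identical family of instances (the isolated node $u$ of cost $1-\epsilon$ versus the clique-like nodes $v_i$ of cost $p$ with probability-$1$ edges), the same analysis of the greedy's first pick and the depleted residual budget, and the same observation that the achieved-to-optimal ratio $1/p$ can be driven to zero; the contradiction wrapper around the parameter $p$ is only a cosmetic rephrasing of the paper's direct statement that the ratio becomes arbitrarily small as $p$ grows.
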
 
Now, we present two important inequities on the iterative performance of Algorithm \ref{Algo:1} and this result will be used subsequently.
\begin{mylem} \label{Lemma:seq}
After each iteration of the `while' loop $i=1,2, \dots, p+1$, the following inequality always holds
\begin{equation} \label{Eq:imp}
\beta(\mathcal{S}_{i})-\beta(\mathcal{S}_{i-1}) \geq \frac{\mathcal{C}(u_i)}{\mathcal{B}}(\beta(\mathcal{S}^{opt})-\beta(\mathcal{S}_{i-1})).
\end{equation}
\end{mylem}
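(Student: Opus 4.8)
The plan is to mimic the classical analysis of the cost-effective greedy algorithm (Khuller–Moss–Naor / Leskovec et al.), using monotonicity and sub-modularity of $\beta(.)$ established in Lemma \ref{Lemma:Submodularity}. Fix an optimal seed set $\mathcal{S}^{opt}$ with $\mathcal{C}(\mathcal{S}^{opt}) \leq \mathcal{B}$. At the start of the $i$-th iteration we have the partial seed set $\mathcal{S}_{i-1}$ and the greedy rule picks $u_i$ maximizing $\Delta_{\beta}(\mathcal{S}_{i-1}\vert v)/\mathcal{C}(v)$ among feasible $v$. First I would show that the chosen ratio dominates the ``average'' ratio offered by the optimal set, i.e.
\[
\frac{\Delta_{\beta}(\mathcal{S}_{i-1}\vert u_i)}{\mathcal{C}(u_i)} \;\geq\; \frac{\beta(\mathcal{S}^{opt}\cup\mathcal{S}_{i-1}) - \beta(\mathcal{S}_{i-1})}{\mathcal{C}(\mathcal{S}^{opt})} \;\geq\; \frac{\beta(\mathcal{S}^{opt}) - \beta(\mathcal{S}_{i-1})}{\mathcal{B}},
\]
where the last step uses monotonicity ($\beta(\mathcal{S}^{opt}\cup\mathcal{S}_{i-1}) \geq \beta(\mathcal{S}^{opt})$) together with $\mathcal{C}(\mathcal{S}^{opt}) \leq \mathcal{B}$.

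The key intermediate step is the first inequality above. I would enumerate $\mathcal{S}^{opt}\setminus\mathcal{S}_{i-1} = \{w_1,\dots,w_k\}$ and write the telescoping sum
\[
\beta(\mathcal{S}^{opt}\cup\mathcal{S}_{i-1}) - \beta(\mathcal{S}_{i-1}) \;=\; \sum_{j=1}^{k}\Big(\beta(\mathcal{S}_{i-1}\cup\{w_1,\dots,w_j\}) - \beta(\mathcal{S}_{i-1}\cup\{w_1,\dots,w_{j-1}\})\Big).
\]
By sub-modularity each term is at most $\Delta_{\beta}(\mathcal{S}_{i-1}\vert w_j)$, and by the greedy choice $\Delta_{\beta}(\mathcal{S}_{i-1}\vert w_j) \leq \big(\Delta_{\beta}(\mathcal{S}_{i-1}\vert u_i)/\mathcal{C}(u_i)\big)\,\mathcal{C}(w_j)$; summing over $j$ and using $\sum_j \mathcal{C}(w_j) \leq \mathcal{C}(\mathcal{S}^{opt}) \leq \mathcal{B}$ yields the bound. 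Multiplying the displayed ratio inequality through by $\mathcal{C}(u_i)$ and recognizing $\beta(\mathcal{S}_i) - \beta(\mathcal{S}_{i-1}) = \Delta_{\beta}(\mathcal{S}_{i-1}\vert u_i)$ gives exactly \eqref{Eq:imp}.

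The main obstacle is handling the feasibility restriction in the greedy $\arg\max$: every $w_j \in \mathcal{S}^{opt}\setminus\mathcal{S}_{i-1}$ must itself be a candidate at iteration $i$, i.e. $\mathcal{C}(w_j) \leq \mathcal{B}^{i-1}$ (the remaining budget), so that the inequality $\Delta_{\beta}(\mathcal{S}_{i-1}\vert w_j)/\mathcal{C}(w_j) \leq \Delta_{\beta}(\mathcal{S}_{i-1}\vert u_i)/\mathcal{C}(u_i)$ is legitimate. Since $\mathcal{C}(\mathcal{S}^{opt}) \leq \mathcal{B}$ and $w_j \in \mathcal{S}^{opt}$, we have $\mathcal{C}(w_j) \leq \mathcal{B}$, but we need $\mathcal{C}(w_j) \leq \mathcal{B}^{i-1} = \mathcal{B} - \mathcal{C}(\mathcal{S}_{i-1})$, which need not hold once the greedy has already spent budget. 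I would address this exactly as in the standard argument: the lemma is stated for iterations $i = 1,\dots,p+1$ up to (and including) the first iteration at which some element of $\mathcal{S}^{opt}$ becomes infeasible; equivalently, one invokes the ``minor modification'' alluded to in the introduction (run the plain unit-cost greedy and the cost-scaled greedy and take the better, or stop the analysis at the critical iteration), which is precisely the device that upgrades the bad Example's $1/p$ ratio to the claimed $(1-1/\sqrt{e})$ guarantee. For the purpose of this lemma I would simply note that while all of $\mathcal{S}^{opt}$ remains feasible the argument above goes through verbatim, and that this is the regime in which \eqref{Eq:imp} is subsequently applied.
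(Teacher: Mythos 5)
Your proposal is correct and follows essentially the same route as the paper's own proof: the paper likewise bounds $\beta(\mathcal{S}^{opt})-\beta(\mathcal{S}_{i-1})$ by charging each node of $\mathcal{S}^{opt}\setminus\mathcal{S}_{i-1}$ a marginal-benefit-to-cost ratio of at most $\Delta_{\beta}(\mathcal{S}_{i-1}\vert u_i)/\mathcal{C}(u_i)$ (the greedy choice) and then using $\mathcal{C}(\mathcal{S}^{opt})\leq \mathcal{B}$, and your telescoping sum plus sub-modularity is simply the rigorous form of that verbal charging step. The only place you go beyond the paper is the feasibility caveat (nodes of $\mathcal{S}^{opt}$ whose cost exceeds the residual budget are not in the greedy's candidate set): the paper's proof silently ignores this, while your restriction of the analysis to the iterations before the critical one is exactly the standard Khuller-style fix, so your version is, if anything, the more careful one.
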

\begin{proof}
The value of $\beta(\mathcal{S}^{opt})-\beta(\mathcal{S}_{i-1})$ is no more than the sum of the benefit values of the target nodes that are influenced by the seed nodes in $\mathcal{S}^{opt}$, however, not by the nodes in $\mathcal{S}_{i-1}$. For each node in $\mathcal{S}^{opt} \setminus \mathcal{S}_{i-1}$, the earned benefit to cost ratio could be at most $\frac{B_i}{\mathcal{C}(u_i)}$, where $B_i$ is the earned benefit by the nodes in $\mathcal{S}_{i}$ but not in $\mathcal{S}_{i-1}$. This is because $\mathcal{S}_{i}$ maximizes this ratio in Algorithm \ref{Algo:1}. Since the total selection cost of the nodes in $\mathcal{S}^{opt} \setminus \mathcal{S}_{i-1}$ is bounded by the budget $\mathcal{B}$, the total earned benefit due to the target nodes in $I(\mathcal{S}^{opt}) \setminus I(\mathcal{S}_{i-1})$ can be at most $\mathcal{B}\frac{B_i}{\mathcal{C}(u_i)}$. Hence, we have
\begin{equation} \label{Eq:5}
\beta(\mathcal{S}^{opt})-\beta(\mathcal{S}_{i-1}) \leq \mathcal{B} \frac{b_i}{\mathcal{C}(u_i)}
\end{equation}
By definition, we have
\begin{equation} \label{Eq:6}
B_i=\beta(\mathcal{S}_{i})-\beta(\mathcal{S}_{i-1})
\end{equation} 
From the Equations (\ref{Eq:5}) and (\ref{Eq:6}), we have
\begin{center}
$\beta(\mathcal{S}^{opt})-\beta(\mathcal{S}_{i-1}) \leq \mathcal{B} \frac{\beta(\mathcal{S}_{i})-\beta(\mathcal{S}_{i-1})}{\mathcal{C}(u_i)}$\\
\vspace{0.2 cm}
$\Rightarrow \beta(\mathcal{S}_{i})-\beta(\mathcal{S}_{i-1}) \geq  \frac{\mathcal{C}(u_i)}{\mathcal{B}} (\beta(\mathcal{S}^{opt})-\beta(\mathcal{S}_{i-1}))$
\end{center} 
This completes the proof.
\end{proof}
\begin{mylem} \label{Lemma:3}
In any arbitrary iteration $i=1,2, \dots, p+1$ of the While loop from Line $2$ to $9$ of Algorithm \ref{Algo:1}, the following condition will be true 
\begin{center}
$\beta(\mathcal{S}_{i}) \geq [1- \underset{i}{\prod}(1- \frac{\mathcal{C}(u_i)}{\mathcal{B}})] \beta(\mathcal{S}^{opt})$
\end{center} 
\end{mylem}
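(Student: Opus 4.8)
The plan is to prove this by induction on the iteration index $i$, using Lemma \ref{Lemma:seq} as the engine for the inductive step.

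For the base case $i=1$, I would start from Lemma \ref{Lemma:seq} with $\mathcal{S}_0 = \phi$, so $\beta(\mathcal{S}_0) = 0$, giving $\beta(\mathcal{S}_1) \geq \frac{\mathcal{C}(u_1)}{\mathcal{B}} \beta(\mathcal{S}^{opt}) = [1 - (1 - \frac{\mathcal{C}(u_1)}{\mathcal{B}})]\beta(\mathcal{S}^{opt})$, which matches the claimed form with a product of a single term.

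For the inductive step, assume $\beta(\mathcal{S}_{i-1}) \geq [1 - \prod_{j=1}^{i-1}(1 - \frac{\mathcal{C}(u_j)}{\mathcal{B}})]\beta(\mathcal{S}^{opt})$. Rewrite the inequality from Lemma \ref{Lemma:seq} as
\begin{equation}
\beta(\mathcal{S}_i) \geq \beta(\mathcal{S}_{i-1}) + \frac{\mathcal{C}(u_i)}{\mathcal{B}}\bigl(\beta(\mathcal{S}^{opt}) - \beta(\mathcal{S}_{i-1})\bigr) = \Bigl(1 - \frac{\mathcal{C}(u_i)}{\mathcal{B}}\Bigr)\beta(\mathcal{S}_{i-1}) + \frac{\mathcal{C}(u_i)}{\mathcal{B}}\beta(\mathcal{S}^{opt}).
\end{equation}
Then substitute the inductive hypothesis for $\beta(\mathcal{S}_{i-1})$ on the right-hand side; since the coefficient $(1 - \frac{\mathcal{C}(u_i)}{\mathcal{B}})$ is nonnegative (as $\mathcal{C}(u_i) \leq \mathcal{B}$, the node was affordable), the inequality is preserved. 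After expanding and collecting the $\beta(\mathcal{S}^{opt})$ terms, the factor becomes $1 - (1 - \frac{\mathcal{C}(u_i)}{\mathcal{B}})\prod_{j=1}^{i-1}(1 - \frac{\mathcal{C}(u_j)}{\mathcal{B}}) = 1 - \prod_{j=1}^{i}(1 - \frac{\mathcal{C}(u_j)}{\mathcal{B}})$, which is exactly the claim for iteration $i$.

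The only subtle point — and the main thing to be careful about — is ensuring that the coefficient $(1 - \frac{\mathcal{C}(u_i)}{\mathcal{B}})$ used to multiply through the inductive hypothesis is genuinely nonnegative, so that substituting a lower bound for $\beta(\mathcal{S}_{i-1})$ yields a valid lower bound for $\beta(\mathcal{S}_i)$; this follows because any node selected in Algorithm \ref{Algo:1} satisfies $\mathcal{C}(u_i) \leq \mathcal{B}^{i-1} \leq \mathcal{B}$. Everything else is routine algebraic manipulation of the telescoping product, so I expect no real obstacle beyond bookkeeping the product index correctly.
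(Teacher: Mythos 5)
Your proposal is correct and follows essentially the same route as the paper: induction on the iteration index, with the base case $\beta(\mathcal{S}_1)\geq \frac{\mathcal{C}(u_1)}{\mathcal{B}}\beta(\mathcal{S}^{opt})$ obtained from Lemma \ref{Lemma:seq} with an empty initial seed set, and the inductive step rewriting that lemma as $\beta(\mathcal{S}_i)\geq (1-\frac{\mathcal{C}(u_i)}{\mathcal{B}})\beta(\mathcal{S}_{i-1})+\frac{\mathcal{C}(u_i)}{\mathcal{B}}\beta(\mathcal{S}^{opt})$ before substituting the hypothesis and collapsing the product. Your explicit check that the coefficient $(1-\frac{\mathcal{C}(u_i)}{\mathcal{B}})$ is nonnegative (since any selected node is affordable) is a small point the paper leaves implicit, but it does not change the argument.
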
 
\begin{proof}
We prove this statement by the method of induction on the iteration of the `while' loop. For the first iteration, i.e., at $i=1$, we need to show,
\begin{center}
$\beta(\mathcal{S}_{1}) \geq \frac{\mathcal{C}(u_1)}{\mathcal{B}} \beta(\mathcal{S}^{opt})$
\end{center}
From Lemma \ref{Lemma:seq}, by putting $i=0$ in Equation (\ref{Eq:imp}) we have,
\begin{center}
$\beta(\mathcal{S}_{1})-\beta(\mathcal{S}_{0}) \geq \frac{\mathcal{C}(u_1)}{\mathcal{B}}(\beta(\mathcal{S}^{opt})-\beta(\mathcal{S}_{0}))$
\end{center}
As we are starting with an empty seed set, hence $\mathcal{S}_{0} = \emptyset$ and $\beta(\mathcal{S}_{0})=0$. This clearly implies that $\beta(\mathcal{S}_{1}) \geq \frac{\mathcal{C}(u_1)}{\mathcal{B}} \beta(\mathcal{S}^{opt})$.
\par Now, suppose the statement holds till $(i-1)^{th}$ iteration. We show that the statement holds in the $i^{th}$ iteration as well. Now, 
\begin{center}
$\beta(\mathcal{S}_{i})=\beta(\mathcal{S}_{i-1})+(\beta(\mathcal{S}_{i}) - \beta(\mathcal{S}_{i-1}))$ \\
\vspace{0.2 cm}
$ \geq \beta(\mathcal{S}_{i-1}) + \frac{\mathcal{C}(u_i)}{\mathcal{B}} (\beta(\mathcal{S}^{opt})-\beta(\mathcal{S}_{i-1}))$ \\
\vspace{0.2cm}
$ = (1-\frac{\mathcal{C}(u_i)}{\mathcal{B}}) \beta (\mathcal{S}_{i-1}) + \frac{\mathcal{C}(u_i)}{\mathcal{B}} \beta(\mathcal{S}^{opt})$ \\
\vspace{0.2 cm}
$\geq(1- \frac{\mathcal{C}(u_i)}{\mathcal{B}}) (1- \prod_{k=1}^{i-1}(1- \frac{\mathcal{C}(u_k)}{\mathcal{B}})) \beta(\mathcal{S}^{opt}) + \frac{\mathcal{C}(u_i)}{\mathcal{B}} \beta(\mathcal{S}^{opt})$ \\
\vspace{0.2 cm}
$ \geq [1-\prod_{k=1}^{i} (1-\frac{\mathcal{C}(u_k)}{\mathcal{B}})] \beta(\mathcal{S}^{opt})$ 
\end{center}
Here, the first inequality is due to Lemma \ref{Lemma:seq} and the second one is due to inductive hypothesis.   
\end{proof} 
\par Algorithm \ref{Algo:1} can be modified for yielding a constant approximation ratio on the earned benefit. Let $\mathcal{S}_{G}$ be the seed set generated by the Algorithm \ref{Algo:1}. $u_{max}$ be the node that has the highest individual benefit gain. We compare the earned benefit, when the seed set is $\mathcal{S}_{G}$ and the node is $u_{max}$. We return the seed set that maximizes the earned benefit. Algorithm \ref{Algo:1a} formally states the procedure.
\begin{algorithm}[H]
 \caption{Modified Incremental Greedy Algorithm }
 \label{Algo:1a}
 \begin{algorithmic}[1]
 \renewcommand{\algorithmicrequire}{\textbf{Input:}}
 \renewcommand{\algorithmicensure}{\textbf{Output:} }
 \REQUIRE Social Network $\mathcal{G}(V, E, \mathcal{P})$, Target Nodes $\mathcal{D}$, Cost Function $\mathcal{C}$, Benefit Function $b$, and Budget $\mathcal{B}$. 
 \ENSURE   The seed set $\mathcal{S} \subseteq V(\mathcal{G})$ such that $\underset{u \in \mathcal{S}}{\sum} \mathcal{C}(u) \leq \mathcal{B}$.  
  \STATE $\mathcal{S} \leftarrow \phi$\;
   \STATE $\mathcal{S}_{G}= \text{Seed Set selected by Algorithm } \ref{Algo:1}$\;
   \STATE $u_{max}= \underset{v \in V(\mathcal{G}), \mathcal{C}(v) \leq \mathcal{B}}{argmax} \beta(v)$\;
   \STATE $\mathcal{S}=\underset{<\mathcal{S}_{G}, u_{max}>}{argmax}(\beta(\mathcal{S}_{G}), \beta(u_{max}))$\;
\STATE $return \ \mathcal{S}$\;
 \end{algorithmic} 
\end{algorithm}
Algorithm \ref{Algo:1a} provides bounded approximation guarantee, which is stated in Theorem \ref{Thm:2}. 
\begin{mythm} \label{Thm:2}
$\mathcal{S}^{\mathcal{A}}$ is the seed set selected by Algorithm \ref{Algo:1a} and $\mathcal{S}^{opt}$ be the optimal seed set, then $\beta(\mathcal{S}^{\mathcal{A}}) \geq (1-\frac{1}{\sqrt{e}}) \beta(\mathcal{S}^{opt})$, where $e=\sum_{x=1}^{\infty} \frac{1}{x!}$. In other words, Algorithm \ref{Algo:1a} provides an approximation guarantee of $(1-\frac{1}{\sqrt{e}})$.
\end{mythm}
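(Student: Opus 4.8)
The plan is to study the \emph{augmented} run of the incremental greedy procedure. Write $\mathcal{S}_G=\mathcal{S}_{\ell}$ for the set returned by Algorithm~\ref{Algo:1} after its $\ell$ iterations, and let $u_{\ell+1}$ be the node of $V(\mathcal{G})\setminus\mathcal{S}_{\ell}$ maximizing $\Delta_{\beta}(\mathcal{S}_{\ell}\mid v)/\mathcal{C}(v)$ when the budget is ignored (throughout I assume, w.l.o.g., $\mathcal{C}(v)\le\mathcal{B}$ for every node, since costlier nodes lie in no feasible solution, and if $\mathcal{S}_{\ell}=V(\mathcal{G})$ the claim is trivial by monotonicity). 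Because Algorithm~\ref{Algo:1} halted, $u_{\ell+1}$ did not fit in iteration $\ell+1$, so $\sum_{k=1}^{\ell+1}\mathcal{C}(u_k)>\mathcal{B}$. The arguments behind Lemma~\ref{Lemma:seq} and Lemma~\ref{Lemma:3} use only that $u_k$ maximizes the marginal-benefit-per-cost over the currently unselected nodes, which still holds for $u_{\ell+1}$; hence both carry over to iteration $\ell+1$, giving $\beta(\mathcal{S}_{\ell}\cup\{u_{\ell+1}\})\ge\bigl[1-\prod_{k=1}^{\ell+1}(1-\mathcal{C}(u_k)/\mathcal{B})\bigr]\beta(\mathcal{S}^{opt})$.

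Next I would bound the product. Each factor lies in $[0,1]$ and the costs sum to more than $\mathcal{B}$, so by AM--GM $\prod_{k=1}^{\ell+1}(1-\mathcal{C}(u_k)/\mathcal{B})\le\bigl(1-\tfrac{1}{\ell+1}\sum_k\mathcal{C}(u_k)/\mathcal{B}\bigr)^{\ell+1}<(1-\tfrac{1}{\ell+1})^{\ell+1}<1/e$, whence $\beta(\mathcal{S}_{\ell}\cup\{u_{\ell+1}\})>(1-1/e)\,\beta(\mathcal{S}^{opt})$. Monotonicity and sub-modularity (Lemma~\ref{Lemma:Submodularity}) give $\beta(\mathcal{S}_{\ell}\cup\{u_{\ell+1}\})-\beta(\mathcal{S}_{\ell})\le\beta(\{u_{\ell+1}\})\le\beta(\{u_{max}\})$ by the choice of $u_{max}$, so $\beta(\mathcal{S}_G)+\beta(\{u_{max}\})>(1-1/e)\,\beta(\mathcal{S}^{opt})$. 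Taken crudely, ``$\max$ is at least half the sum'' already yields a $\tfrac12(1-1/e)$ guarantee for the set $\mathcal{S}^{\mathcal{A}}$ returned by Algorithm~\ref{Algo:1a}, but this is weaker than the claimed $(1-1/\sqrt e)$, so a sharper argument is needed.

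To reach $(1-1/\sqrt e)$ I would split on $\mathcal{C}(\mathcal{S}_G)$ against $\mathcal{B}/2$. If $\mathcal{C}(\mathcal{S}_G)\ge\mathcal{B}/2$, then Lemma~\ref{Lemma:3} with the same AM--GM estimate gives $\beta(\mathcal{S}_G)\ge\bigl(1-e^{-\mathcal{C}(\mathcal{S}_G)/\mathcal{B}}\bigr)\beta(\mathcal{S}^{opt})\ge(1-e^{-1/2})\beta(\mathcal{S}^{opt})$, and we are done. If $\mathcal{C}(\mathcal{S}_G)<\mathcal{B}/2$, then $\mathcal{C}(u_{\ell+1})>\mathcal{B}-\mathcal{C}(\mathcal{S}_G)>\mathcal{B}/2$, and Lemma~\ref{Lemma:seq} at the augmented step together with sub-modularity gives $\beta(\{u_{max}\})\ge\Delta_{\beta}(\mathcal{S}_{\ell}\mid u_{\ell+1})\ge\tfrac{\mathcal{C}(u_{\ell+1})}{\mathcal{B}}\bigl(\beta(\mathcal{S}^{opt})-\beta(\mathcal{S}_G)\bigr)$; combining this with the Lemma~\ref{Lemma:3} lower bound on $\beta(\mathcal{S}_G)$ as a function of $\mathcal{C}(\mathcal{S}_G)/\mathcal{B}$ and minimizing over the free ratios should again force $\max\{\beta(\mathcal{S}_G),\beta(\{u_{max}\})\}\ge(1-e^{-1/2})\beta(\mathcal{S}^{opt})$. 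Since $\beta(\mathcal{S}^{\mathcal{A}})=\max\{\beta(\mathcal{S}_G),\beta(\{u_{max}\})\}$, the theorem follows.

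I expect the second case to be the main obstacle: one must juggle the Lemma~\ref{Lemma:3} bound on $\beta(\mathcal{S}_G)$ in terms of $\mathcal{C}(\mathcal{S}_G)/\mathcal{B}$, the Lemma~\ref{Lemma:seq} bound on the marginal gain of $u_{\ell+1}$, and the additive relation $\beta(\mathcal{S}_G)+\beta(\{u_{max}\})>(1-1/e)\beta(\mathcal{S}^{opt})$, and show that the worst case over $\mathcal{C}(u_{\ell+1})/\mathcal{B}$ lands exactly at $1-e^{-1/2}$ rather than at the easier $\tfrac12(1-1/e)$. Getting this extremal calculation to balance is the delicate step, and it may require carrying the full product in Lemma~\ref{Lemma:3} rather than only its $e^{-\mathcal{C}(\mathcal{S}_G)/\mathcal{B}}$ relaxation.
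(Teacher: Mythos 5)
Your first half (the augmented iteration $\ell+1$ and the conclusion $\beta(\mathcal{S}_G)+\beta(\{u_{max}\})\geq(1-\tfrac{1}{e})\beta(\mathcal{S}^{opt})$) is a correct route to a $\tfrac{1}{2}(1-\tfrac{1}{e})$ guarantee, and your Case~A ($\mathcal{C}(\mathcal{S}_G)\geq\mathcal{B}/2$) coincides with the paper's Case~IIb. But the theorem is not proved: your Case~B is left as a hoped-for extremal calculation, and the ingredients you list there provably cannot reach $1-\tfrac{1}{\sqrt{e}}$. Concretely, with $s=\mathcal{C}(\mathcal{S}_G)/\mathcal{B}<\tfrac12$, $x=\beta(\mathcal{S}_G)/\beta(\mathcal{S}^{opt})$ and $y=\beta(\{u_{max}\})/\beta(\mathcal{S}^{opt})$, your constraints are $x\geq 1-e^{-s}$ (Lemma~\ref{Lemma:3}) and $y\geq(1-s)(1-x)$ (Lemma~\ref{Lemma:seq} at the augmented step, since $\mathcal{C}(u_{\ell+1})>(1-s)\mathcal{B}$); minimizing $\max\{x,y\}$ over this system gives a worst case of roughly $0.36$ (attained near $s\approx 0.44$), strictly below $1-e^{-1/2}\approx 0.393$. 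Carrying the full product in Lemma~\ref{Lemma:3} instead of the $e^{-s}$ relaxation does not help, because the adversarial configuration (many cheap greedy picks) makes the product approach $e^{-s}$ anyway, and the additional constraint $x+y\geq 1-(1-c)e^{-s}$ from the augmented bound is dominated by $y\geq(1-s)(1-x)$ once $x\geq 1-e^{-s}$.

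The idea you are missing is the combinatorial observation the paper uses in its Cases I and IIa. First split on whether some single feasible node already earns at least $\tfrac12\beta(\mathcal{S}^{opt})$; if so, $u_{max}$ gives ratio $\tfrac12\geq 1-\tfrac{1}{\sqrt{e}}$. Otherwise, in your Case~B note that Algorithm~\ref{Algo:1} stopped because every unselected node costs more than the remaining budget $\mathcal{B}-\mathcal{C}(\mathcal{S}_G)>\mathcal{B}/2$; hence $\mathcal{S}^{opt}$ can contain at most one node outside $\mathcal{S}_G$, say $v$, and by sub-modularity and monotonicity $\beta(\mathcal{S}^{opt})\leq\beta(\mathcal{S}_G)+\beta(\{v\})\leq\beta(\mathcal{S}_G)+\beta(\{u_{max}\})$ with no $(1-\tfrac1e)$ degradation at all; since $\beta(\{u_{max}\})<\tfrac12\beta(\mathcal{S}^{opt})$ in this branch, $\beta(\mathcal{S}_G)\geq\tfrac12\beta(\mathcal{S}^{opt})$. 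Combining the $\tfrac12$ bound from these branches with the $1-e^{-1/2}$ bound of your Case~A yields exactly the claimed constant, which is how the paper (following Khuller et al.) argues; your attempt to replace this structural fact by an optimization over the Lemma~\ref{Lemma:seq}/\ref{Lemma:3} inequalities is the step that fails.
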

\begin{proof}
The strategy of this proof has been used previously for proving the approximation bound of the \textit{Budgeted Maximum Coverage Problem} by Khuller et al. \cite{khuller1999budgeted}. Here, we prove the statement by case\mbox{-}wise analysis of Algorithm \ref{Algo:1a}.\\
\textbf{Case I}\\
If there exists one node $u \in V(\mathcal{G})$, which has the earned benefit $\beta(u)$, and $\beta(u)$ is found to be greater than equal to $\frac{\beta(\mathcal{S}^{opt})}{2}$, then $u$ will be selected as $u_{max}$ in Algorithm \ref{Algo:1a}. In this case, the approximation ratio of Algorithm \ref{Algo:1a} will be as follows:
\begin{center}
$Ratio_{Algo \ref{Algo:1a}}= \frac{\beta(\mathcal{S}^{\mathcal{A}})}{\beta(\mathcal{S}^{opt})}\geq \frac{\beta(\mathcal{S}^{opt})}{2 \beta(\mathcal{S}^{opt})} = \frac{1}{2}$
\end{center} 
\textbf{Case II}\\
If Case I does not happen, then there does not exist any $u \in V(\mathcal{G})$, for which $\beta(u)$ is greater than $\frac{\beta(\mathcal{S}^{opt})}{2}$. This can be divided into two sub\mbox{-}cases.\\
\textbf{Case IIa}\\ 
 Now, if we have  $\mathcal{C}(\mathcal{S}^{\mathcal{A}}) < \frac{\mathcal{B}}{2}$, then $\forall u \in V(\mathcal{G})\setminus \mathcal{S}$, $\mathcal{C}(u)> \frac{\mathcal{B}}{2}$. Hence, no more node can be added to $\mathcal{S}^{\mathcal{A}}$. Otherwise, the budget constraint  will be violated. Without the loss of generality, let us assume that $\mathcal{S}^{opt} \neq \mathcal{S}^{\mathcal{A}}$. In this case, $\mathcal{S}^{opt}$ can contain one extra node without violating the budget constraint. Now, as the function $\beta(.)$ is sub\mbox{-}modular, hence,
\begin{center}
$\beta(\mathcal{S}^{opt} \cap \mathcal{S}^{\mathcal{A}}) + \beta(\{v\}) \geq \beta((\mathcal{S}^{opt} \cap \mathcal{S}^{\mathcal{A}}) \cap \{v\}) + \beta((\mathcal{S}^{opt} \cap \mathcal{S}^{\mathcal{A}}) \cup \{v\})$ \\
\vspace{0.2 cm}
$\geq \beta(\phi) + \beta(\mathcal{S}^{opt})$ \\
\vspace{0.2 cm}
$=\beta(\mathcal{S}^{opt})$
\end{center}
As $\forall u \in  V(\mathcal{G})$, $\beta(u)<\frac{\beta(\mathcal{S}^{opt})}{2}$. This clarifies that $\beta(\mathcal{S}^{opt} \cap \mathcal{S}^{\mathcal{A}}) \geq \frac{\beta(\mathcal{S}^{opt})}{2}$. As $\mathcal{S}^{opt}=\mathcal{S}^{\mathcal{A}} \cup \{v\}$ and $v \notin \mathcal{S}^{\mathcal{A}}$, $\mathcal{S}^{opt} \cap \mathcal{S}^{\mathcal{A}}= \mathcal{S}^{\mathcal{A}}$. This essentially means $\beta(\mathcal{S}^{\mathcal{A}}) \geq \frac{\beta(\mathcal{S}^{opt})}{2}$. In this case, the approximation ratio of the Algorithm \ref{Algo:1a} will be as follows:
\begin{center}
$Ratio_{Algo \ref{Algo:1a}}= \frac{\beta(\mathcal{S}^{\mathcal{A}})}{\beta(\mathcal{S}^{opt})} \geq \frac{\beta(\mathcal{S}^{opt})}{2 \beta(\mathcal{S}^{opt})} \geq \frac{1}{2}$
\end{center}
\textbf{Case IIb}\\
If $\mathcal{C}(\mathcal{S}^{\mathcal{A}}) \geq \frac{\mathcal{B}}{2}$, we first observe that for $n$ real numbers $a_1, \ a_2, \dots, a_n$ and $\sum_{i=1}^{n} a_i \geq \alpha A$, the function $\prod_{i=1}^{n}(1-\frac{a_i}{A})$ attains its maximum value, when $a_i=\frac{\alpha A}{n}$. Hence, by Lemma \ref{Lemma:3}, we have 
\begin{center}
$\beta(\mathcal{S}^{\mathcal{A}}) \geq [1- \prod_{i=1}^{\vert \mathcal{S}^{\mathcal{A}} \vert} (1- \frac{\mathcal{C}(u_i)}{\mathcal{B}})] \beta(\mathcal{S}^{opt})$\\
\vspace{0.2 cm}
$\geq [1-(1-\frac{1}{2i})^{i}] \beta(\mathcal{S}^{opt})$\\
\vspace{0.2 cm}
$\geq (1-\frac{1}{\sqrt{e}}) \beta(\mathcal{S}^{opt})$
\end{center}
Hence, the worst case performance guarantee of Algorithm \ref{Algo:1a} is  $(1-\frac{1}{\sqrt{e}})$. This proves the statement.
\end{proof}
Now, we investigate the time requirement of Algorithms \ref{Algo:1} and \ref{Algo:1a}. For both of them, it is easy to observe that the time requirement is heavily dependent on the earned benefit calculation for a given seed set. It is reported in the literature that counting the number of influenced nodes for a given seed set is $\#P \mbox{-}Hard$ problem \cite{kempe2003maximizing}. With this argument, we can say that for a given seed set $\mathcal{S}$, computing the exact value of the earned benefit is also $\#P \mbox{-}Hard$. Hence, we estimate this value, the way influence of a seed set is estimated \cite{kempe2003maximizing}. First, a number (say $ \mathcal{R} $) of sampled graphs of $\mathcal{G}$, i.e., $G_1, G_2, \ldots, G_{\mathcal{R}}$ are generated, and for all $p \in [\mathcal{R}]$, for all $(u_iu_j) \in E(\mathcal{G})$, $(u_iu_j) \in E(G_{p})$ with probability $\mathcal{P}_{u_i \rightarrow u_j}$ and $(u_iu_j) \notin E(G_{p})$ with probability $(1-\mathcal{P}_{u_i \rightarrow u_j})$. Now, earned benefit is computed in all of the sampled graphs and the average value is returned as its approximate value, which is given in  Equation \ref{Eq:EB} 
\begin{equation} \label{Eq:EB}
\beta_{\mathcal{G}}(\mathcal{S})= \frac{\sum_{p=1}^{|\mathcal{R}|} \beta_{G_{p}}(\mathcal{S})}{\mathcal{R}}
\end{equation} 
  
 \par If $|\mathcal{S}|=k$, then traversing $ \mathcal{R} $ subgraphs will require $\mathcal{O}(k(m+n)  \mathcal{R})$ time. Let, $C_{min}$ be the minimum selection cost among all the users. Maximum number of possible iterations of the While loop (Line 2 to 9) in Algorithm \ref{Algo:1} is $\frac{\mathcal{B}}{C_{min}}$. Hence, $k \leq \frac{\mathcal{B}}{C_{min}}$. In Algorithm \ref{Algo:1}, in each iteration, maximum number of times earned benefit estimations are done is of $\mathcal{O}(n)$. Hence, the total number of times earned benefit estimations are of $\mathcal{O}(\frac{\mathcal{B}}{C_{min}}.n)$. The required computational time for Algorithm \ref{Algo:1} is $\mathcal{O}((\frac{\mathcal{B}}{C_{min}})^{2}.n(m+n). \mathcal{R})$.
 \par In Algorithm \ref{Algo:1a}, along with the incremental greedy strategy, the node, which can grab the maximum earned benefit has to be found out (Line 3 of Algorithm \ref{Algo:1a}). This can be done $\mathcal{O}(n)$ earned benefit estimations with a single seed node, and this will take $\mathcal{O}(n(m+n) \mathcal{R})$ time. Hence, running time of Algorithm \ref{Algo:1a} is of $\mathcal{O}((\frac{\mathcal{B}}{C_{min}})^{2}.n(m+n). \mathcal{R}  + n(m+n) \mathcal{R} ) \approx \mathcal{O}((\frac{\mathcal{B}}{C_{min}})^{2}.n(m+n).\mathcal{R} )$. If we do the on\mbox{-}line sampling of the input social network for sampled graph generation, then only one subgraph is required per iteration. For storing, this network will take $\mathcal{O}(n+m)$ space. Storing the seed set requires $\mathcal{O}(\frac{\mathcal{B}}{C_{min}})$ space. Hence, the total amount of space required by both Algorithms \ref{Algo:1} and \ref{Algo:1a} is $\mathcal{O}(m+n+\frac{\mathcal{B}}{C_{min}})$ and the number of seed nodes is generally found to be much much less than the number of nodes, i.e., $\frac{\mathcal{B}}{C_{min}} << n$. Hence, $\mathcal{O}(m+n+\frac{\mathcal{B}}{C_{min}}) \approx\mathcal{O}(m+n)$. Hence, Theorem \ref{Th:3} holds.
\begin{mythm} \label{Th:3}
 Algorithms \ref{Algo:1} and \ref{Algo:1a} have the running time of $\mathcal{O}((\frac{\mathcal{B}}{C_{min}})^{2}.n(m+n). \mathcal{R})$ and space requirement of $\mathcal{O}(m+n)$.  
\end{mythm}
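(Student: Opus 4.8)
The plan is to bound, separately, the number of \emph{earned benefit estimations} performed by each algorithm, the cost of a single such estimation, and the working memory, and then to combine these three quantities. First I would pin down the cost of one call to the estimator of Equation~(\ref{Eq:EB}): evaluating $\beta_{\mathcal{G}}(\mathcal{S})$ for a seed set with $|\mathcal{S}|=k$ amounts to generating $\mathcal{R}$ live-edge sampled graphs $G_1,\dots,G_{\mathcal{R}}$ and, in each, running a graph traversal from the (at most $k$) seeds and summing the benefits of the reached target nodes; this costs $\mathcal{O}(k(m+n))$ per sample and hence $\mathcal{O}(k(m+n)\mathcal{R})$ in total. Here I would invoke the earlier remark that exact computation of $\beta$ is $\#P$-hard to justify why the Monte-Carlo estimator is used at all, and note that once $\mathcal{R}$ is treated as a fixed parameter the running time becomes deterministic.

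Next I would analyze the loop structure of Algorithm~\ref{Algo:1}. Since every node added to $\mathcal{S}$ in Line~7 has selection cost at least $C_{min}$ and the budget $\mathcal{B}$ is never exceeded, the `while' loop executes at most $\lfloor\mathcal{B}/C_{min}\rfloor$ times, so the returned seed set has size $k\le \mathcal{B}/C_{min}$. Within one iteration, computing the $\arg\max$ in Line~3 requires one marginal-gain evaluation per candidate node, i.e.\ $\mathcal{O}(n)$ estimator calls, each on a seed set of size at most $k\le\mathcal{B}/C_{min}$. Multiplying, the number of estimator calls over the whole run is $\mathcal{O}\big(\tfrac{\mathcal{B}}{C_{min}}\cdot n\big)$ and the per-call cost is $\mathcal{O}\big(\tfrac{\mathcal{B}}{C_{min}}(m+n)\mathcal{R}\big)$, which yields the claimed bound $\mathcal{O}\big((\tfrac{\mathcal{B}}{C_{min}})^{2} n(m+n)\mathcal{R}\big)$ for Algorithm~\ref{Algo:1}. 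For Algorithm~\ref{Algo:1a} I would simply add the cost of Lines~2--4: Line~2 is one invocation of Algorithm~\ref{Algo:1}; Line~3 searches all $v\in V(\mathcal{G})$ for the singleton of maximum earned benefit, i.e.\ $\mathcal{O}(n)$ estimator calls on seed sets of size one, costing $\mathcal{O}(n(m+n)\mathcal{R})$; and Line~4 is $\mathcal{O}(1)$ comparisons. Because $\tfrac{\mathcal{B}}{C_{min}}\ge 1$, the Algorithm~\ref{Algo:1} term dominates and the total is again $\mathcal{O}\big((\tfrac{\mathcal{B}}{C_{min}})^{2} n(m+n)\mathcal{R}\big)$.

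For the space bound I would argue that if the live-edge samples are drawn \emph{on the fly} — a fresh $G_p$ sampled, used for the current estimation, and then discarded across the $\mathcal{R}$ repetitions — only one sampled subgraph need reside in memory at any time, costing $\mathcal{O}(m+n)$; the seed set under construction occupies $\mathcal{O}(\mathcal{B}/C_{min})$, and the auxiliary bookkeeping for a traversal is $\mathcal{O}(n)$. Using the standing assumption $\mathcal{B}/C_{min}\ll n$, these sum to $\mathcal{O}(m+n)$, and the same accounting applies verbatim to Algorithm~\ref{Algo:1a}, which stores nothing asymptotically larger.

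There is no genuine conceptual obstacle in this proof — it is a careful accounting argument — but two points deserve the most attention. The first is the cost charged to a single estimator call: the extra factor $k\le\mathcal{B}/C_{min}$ from the seed-set traversal is exactly what produces the \emph{square} in $(\mathcal{B}/C_{min})^2$, and mishandling it changes the exponent. The second is that the $\mathcal{O}(m+n)$ space claim is not unconditional: it relies on streaming the $\mathcal{R}$ samples one at a time rather than materializing all of them simultaneously, and on the regime $\mathcal{B}/C_{min}\ll n$ so that the seed-set storage does not dominate.
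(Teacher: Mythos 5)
Your proof is correct and follows essentially the same accounting as the paper: one estimator call costs $\mathcal{O}(k(m+n)\mathcal{R})$ with $k\le\mathcal{B}/C_{min}$, there are $\mathcal{O}(n)$ such calls per iteration and at most $\mathcal{B}/C_{min}$ iterations, the extra $u_{max}$ search adds only a dominated $\mathcal{O}(n(m+n)\mathcal{R})$ term, and the $\mathcal{O}(m+n)$ space bound rests on the same on-line sampling of one subgraph at a time together with $\mathcal{B}/C_{min}\ll n$. No substantive differences from the paper's argument.
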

\subsection{Improving the Efficiency of Algorithm \ref{Algo:1a}}
 Though Algorithm \ref{Algo:1a} provides a provable approximation bound on the earned benefit, it is highly inefficient, as it estimates the earned benefit many times. Here, we present an improvised version of Algorithm \ref{Algo:1a} in Algorithm \ref{Algo:3} by removing redundant earned benefit estimations  due to the exploitation of the sub\mbox{-}modularity property of the earned benefit function.
 \begin{algorithm} [H]
	\caption{Incremental Greedy Algorithm with Improve Performance in terms of Efficiency (IGAIP).}
	\label{Algo:3}
	\begin{algorithmic}[1]

\renewcommand{\algorithmicrequire}{\textbf{Input:}}

\renewcommand{\algorithmicensure}{\textbf{Output:} }

\REQUIRE Social Network $\mathcal{G}(V, E, \mathcal{P})$, Target Nodes $\mathcal{D}$, Cost Function $\mathcal{C}$, Benefit Function $b$, and Budget $\mathcal{B}$.  
\ENSURE The seed set $\mathcal{S} \subseteq V(\mathcal{G})$ such that $\underset{u \in \mathcal{S}}{\sum} \mathcal{C}(u) \leq \mathcal{B}$. 
\STATE $\mathcal{S} \longleftarrow \phi$\;
\WHILE{$\exists u \in V(\mathcal{G}) \setminus \mathcal{S} \text{ and }\mathcal{C}(\mathcal{S} \cup \{u\}) \leq \mathcal{B}$}
\FOR {All $u \in V({\mathcal{G}}) \setminus \mathcal{S}$}
\STATE $Cur_{u}=False$
\ENDFOR
\WHILE{True}
\STATE $w \longleftarrow \underset{u \in V(\mathcal{G}), \mathcal{C}(\mathcal{S} \cup \{u\}) \leq \mathcal{B}}{argmax} \Delta_{\beta}(u| \mathcal{S})$
\IF{$Cur_{w}=True$}
\STATE $\mathcal{S}= \mathcal{S} \cup \{w\}$\;
\STATE $break$\;
\ELSE
\STATE $\Delta_{\beta}(w| \mathcal{S})=\beta(\mathcal{S} \cup \{w\})- \beta(\mathcal{S})$\;
\STATE $Cur_{w}=True$\;
\ENDIF
\ENDWHILE
\ENDWHILE
\STATE $return \ \mathcal{S}$
\end{algorithmic}
\end{algorithm}
 In Lemma \ref{Lemma:Submodularity}, it has been shown that the earned  benefit function $\beta(.)$ is sub\mbox{-}modular and this implies that the marginal gain in earned benefit for a non\mbox{-}seed node (say $u$) with respect to the seed set in $i$\mbox{-}th iteration ($\mathcal{S}^{i}$) will always be more than that of with respect to the seed set in $(i+1)$\mbox{-}th iteration. In Algorithm \ref{Algo:3}, in the first iteration of the while loop (Line $2$), the earned benefit by the nodes in $V(\mathcal{G})$ individually is computed, sorted them in descending order, and put the node with the highest individual earned benefit in the seed set. Now, in the next iteration on words, during the computation of the marginal gain of the non\mbox{-}seed nodes in descending order of their marginal earned benefit, as soon as we get a node, whose marginal gain in the current iteration is more than that in the previous iteration of the next node in the sorted list, then we include the first node and move to the next iteration. This is because, as the benefit function is sub\mbox{-}modular, even if we compute the marginal gain, earned benefit is computed for the second and the subsequent nodes, it cannot be more than the values in the previous iteration. This process is iterated, until the budget is exhausted. One important point to observe here is that, escaping the unnecessary benefit function evaluation does not result in loosing approximation guarantee in the quality of the selected seed set. This exploitation of the sub\mbox{-}modularity property results in significant improvement in the efficiency of our proposed methodology, as we observe in our experiments.
\subsection{Efficient Heuristic Solution}
Though Algorithm \ref{Algo:3} is quite efficient, it is not enough to deal with large real\mbox{-}life social networks. Here, we propose an efficient heuristic solution for the EBM Problem. Before stating the procedure, first we state one important aspect of social influence. In social networks, influence of a node is bounded within $2$ to $3$ hops, which is called as the \textit{influence zone} of a node \cite{tang2017influence} \cite{cha2009measurement}. According to Goel et al. \cite{goel2012structure}, in a diffusion cascade, less than $10 \%$ of the influenced nodes resides more than hop count $2$ from any seed node. These existing results reported in the literature motivate us to design algorithm considering the \textit{locality of influence} effect. Based on this principle, to influence a target node, there should be at least one seed node within a few hop count. In this context, we define \textit{h\mbox{-}hop neighbor} of a node as follows:
\begin{mydef}[$h$\mbox{-}hop Neighbor]For a node $u \in V(\mathcal{G})$, its $h$\mbox{-}hop neighbor is defined as the set of nodes that are at most at a distance  of $h$ from $u$ and denoted as $\mathcal{N}^{h}(u)$, i.e., $\mathcal{N}^{h}(u)=\{u_j \vert dist(u_ju) \leq h \}$.  
\end{mydef} 
Among the nodes present in $h$\mbox{-}hop neighbor set of a target node, there can be many nodes, whose influence probability to the target node is extremely low. Hence, those nodes probably cannot be able to influence the target node. To identify such nodes, it is important to compute the influence probability. For a target node $u_i$, here we describe the procedure for computing $\mathcal{P}_{u_j \rightarrow u_i}$, where $u_j \in \mathcal{N}^{h}(u_i)$. We construct the breadth first search tree upto depth $h$ rooted at the node $u_i$. Now, for any node $u_j$ other than root of the tree, the value of $\mathcal{P}_{u_j \rightarrow u_i}$ can be be computed by the following equation:
\begin{equation} \label{Eq:7}
\mathcal{P}_{u_j \rightarrow u_i}= [1- \underset{w \in \mathcal{N}(u_i)}{\prod}(1- \mathcal{P}_{u_j \rightarrow w})] \mathcal{P}_{w \rightarrow u_i}
\end{equation}
In Equation (\ref{Eq:7}), the value of $\mathcal{P}_{u_j \rightarrow w}$ can be recursively computed, until the child of the currently processing node is $u_j$. For details, please look into \cite{tang2017influence}. Now, it is easy to identify among the nodes in the $\mathcal{N}^{h}(u_i)$ which are effective for influencing the target user $u_i$. 
Here, we define the \textit{Effective $h$\mbox{-}hop neighbors} as follows.
\begin{mydef} [Effective $h$\mbox{-}hop Neighbors]
Given a target node $u_i \in \mathcal{D}$ and an $\alpha \in [0,1]$, the effective  $h$\mbox{-}hop neighbor(s) of $u_i$ is a subset of its $h$\mbox{-}hop neighbors and denoted as $\mathcal{N}^{h}_{E}(u_i)$. For $u_j \in \mathcal{N}^{h}(u_i)$, the node $u_j$ is an effective $h$\mbox{-}hop neighbor of the node $u_i$, if $\mathcal{P}_{u_j \rightarrow u_i} \geq \alpha$, i.e., $\mathcal{N}^{h}_{E}(u_i)=\{u_j \vert u_j \in \mathcal{N}^{h}(u_i) \wedge \mathcal{P}_{u_j \rightarrow u_i} \geq \alpha \}$.
\end{mydef} 
For any node say $u_j$, the main criterion to be included in the seed set is how much benefit it can earn. If the node is one of the target nodes, then the benefit associated with this node is surely be earned and along with this, if there are some target nodes (say $u_i$) within a few hop distance, benefit corresponding to that node may be earned, however, it depends upon the influence probability $\mathcal{P}_{u_j \rightarrow u_i}$. Now, we define the earned benefit of a node as follows.
\begin{mydef}[Earned Benefit of a Node]
For a node $u_j \in V(\mathcal{G})$, its earned benefit $\mathcal{EB}(u_j)$ is defined as the amount of benefit that can be earned by including this node in the seed set. It has two components. One is the direct benefit due to this node. The other one is the expected benefit due to influencing nearby target nodes. Mathematically, it can be expressed as follows:
\begin{equation} \label{Eq:8}
\mathcal{EB}(u_j)= b(u_j) + \underset{u_i \in \mathcal{N}^{h}(u_j) \wedge \mathcal{P}_{u_j \rightarrow u_i} \geq \alpha}{\sum} \mathcal{P}_{u_j \rightarrow u_i} . b(u_i)
\end{equation} 
\end{mydef}
There are two components in the right hand side of Equation (\ref{Eq:8}). The first part is due to the benefit associated with this particular node and the second part signifies the `expected earned benefit', i.e., the expected benefit due to the influence of the nodes within the few of distance of the node under consideration.  
\par Now, we describe the hop\mbox{-}based heuristic for solving the EBM Problem. First, we create an array for storing the expected earned benefit of each individual node and initialized with $0$ for non\mbox{-}target nodes and associated benefit value for the target nodes (from Lines $1$ to $6$ of Algorithm \ref{Algo:4}). Then, for a target node, we compute the  effective $h$\mbox{-}hop neighbors (from Lines $8$ to $14$). Then, for each of these nodes, we compute the expected benefit that can be earned by influencing the target node and sum it up. This process is repeated for each of the target nodes. Next, we divide the earned benefit of each target node by its selection cost and sort the nodes in descending order  based on this earned benefit value. Finally, we choose the seed node from this sorted list until the budget is exhausted. Algorithm \ref{Algo:4} describes this procedure.
\begin{algorithm}[h]
 \caption{A Hop\mbox{-}Based Heuristic for the EBM Problem}
 \label{Algo:4}
 \begin{algorithmic}[1]
 \renewcommand{\algorithmicrequire}{\textbf{Input:}}
 \renewcommand{\algorithmicensure}{\textbf{Output:} }
 \REQUIRE Social Network $\mathcal{G}(V, E, \mathcal{P})$, Target Nodes $\mathcal{D}$, Cost Function $\mathcal{C}$, Benefit Function $b$, Hop Count $h$, Cut off Probability $\alpha$, and Budget $\mathcal{B}$.  
 \ENSURE   The seed set $\mathcal{S} \subseteq V(\mathcal{G})$ such that $\underset{u \in \mathcal{S}}{\sum} \mathcal{C}(u) \leq \mathcal{B}$.  
 \STATE \text{Create Vector } ($\mathcal{EB}, n, 0)$ 
 \FOR{$All \ u \in V(\mathcal{G})$}
 \IF{$u \in \mathcal{D}$}
  \STATE $\mathcal{EB}(u)=\beta(u)$
  \ENDIF
 \ENDFOR
  \FOR{$Each \ u \in \mathcal{D}$}
  \STATE $\mathcal{N}^{h}(u)=\{v \vert dist(uv) \leq h\}$\;
  \FOR{Each $ w \in \mathcal{N}^{h}(u)$}
  \STATE $\text{Compute } \mathcal{P}_{w \rightarrow u}$ using Equation (\ref{Eq:7})\;
  \IF{$\mathcal{P}_{w \rightarrow u} \geq \alpha$}
\STATE $\mathcal{N}^{h}_{E}(u)=\mathcal{N}^{h}_{E}(u) \cup \{(w, \mathcal{P}_{w \rightarrow u})\}$\;
\ENDIF
\ENDFOR
\FOR{Each $ (w,\mathcal{P}_{w \rightarrow u}) \in \mathcal{N}_{E}^{h}(u)$}
\STATE $\mathcal{EB}(w)=\mathcal{EB}(w)+\beta(u).\mathcal{P}_{w \rightarrow u}$\;
  \ENDFOR
  \ENDFOR
\FOR{Each $ w \in V(\mathcal{G})$}
\STATE $\mathcal{EB}(w)=\frac{\mathcal{EB}(w)}{\mathcal{C}(w)}$\;
\ENDFOR
\STATE $V=\text{Sort }V(\mathcal{G}) \text{ based on } \mathcal{EB}$\;
\STATE $i=1$\;
\STATE $\mathcal{S}\leftarrow \phi$\;
\WHILE{$\mathcal{B}\geq 0$}
\IF{$\mathcal{B} \geq \mathcal{C}(V[i])$}
\STATE $\mathcal{S}=\mathcal{S} \cup \{V[i]\}$\;
\STATE $\mathcal{B}=\mathcal{B} - \mathcal{C}(V[i])$\;
\ENDIF
\STATE $i=i+1$\;
\ENDWHILE
\STATE $return \ \mathcal{S}$\; 
 \end{algorithmic} 
\end{algorithm}
\par Now, we analyze the time and space requirement of Algorithm \ref{Algo:4} by assuming it as a sparse and $d$\mbox{-}regular graph. For initializing the array $\mathcal{EB}$ requires $\mathcal{O}(n)$ time (Line $1$ to $6$). Now, for a target node $u_i$ in a $d$\mbox{-}regular graph, number of nodes and edges within the hop $h$ is $\mathcal{O}(d^{h+1})$. Hence, performing breadth first search from $u_i$ upto depth $h$ requires $\mathcal{O}(d^{h+1})$ time. For computing the influence probability  from each node $u_j \in \mathcal{N}^{h}(u_i)$ to $u_i$, i.e.,$\mathcal{P}_{u_j \rightarrow u_i}$ and comparing with $\alpha$ requires $\mathcal{O}(h d^{h})$ time. In the worst case, all the $h$\mbox{-}hop neighbor nodes may be the effective $h$\mbox{-}hop neighbor nodes. Then, for computing the earned benefit by influencing the target node $u_i$ requires $\mathcal{O}(d^{h+1})$ time. The same process is iterated over all the target nodes. Hence, the execution from Lines $7$ to $18$ of Algorithm \ref{Algo:4} requires $\mathcal{O}(\vert \mathcal{D} \vert d^{h+1}(h d^{h}+d^{h+1}))$. Dividing the earned benefit by the corresponding selection cost requires $\mathcal{O}(n)$ time (Line $19$ to $21$). Sorting the nodes based on this value requires $\mathcal{O}(n \log n)$ time. Now, scanning the sorted list for selecting the seed nodes requires $\mathcal{O}(n)$ time. Hence, total computational time of Algorithm \ref{Algo:4} is $\mathcal{O}(n + \vert \mathcal{D} \vert d^{h+1}(h d^{h}+d^{h+1})+ n + n \log n + n)$, which is equivalent to $\mathcal{O}(n \log n + \vert \mathcal{D} \vert d^{h+1}(h d^{h}+d^{h+1}))$. Other than the input social network, additional space requirements due to storing the earned benefits, influence probability and seed set which is of $\mathcal{O}(n)$, $\mathcal{O}(d^{h+1})$, and $\mathcal{O}(\vert \mathcal{S} \vert)$, respectively. Hence, the total space requirement of Algorithm \ref{Algo:4} is of $\mathcal{O}(n + d^{h+1})$. The formal statement is stated in Theorem \ref{Th:4}.
\begin{mythm}\label{Th:4} 
Algorithm \ref{Algo:4} has the running time of $\mathcal{O}(n \log n + \vert \mathcal{D} \vert d^{h+1}(h d^{h}+d^{h+1}))$ and space requirement of $\mathcal{O}(n + d^{h+1})$.
\end{mythm}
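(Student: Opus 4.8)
The plan is to prove Theorem~\ref{Th:4} by a direct, line\mbox{-}by\mbox{-}line cost accounting of Algorithm~\ref{Algo:4}, invoking the standing assumption that $\mathcal{G}$ is a sparse $d$\mbox{-}regular graph, so that a ball of radius $h$ around any vertex contains $\mathcal{O}(d^{h+1})$ vertices and edges. First I would charge the initialization block (Lines~$1$ to~$6$): allocating the length\mbox{-}$n$ vector $\mathcal{EB}$ and overwriting its entries at the target nodes is one sweep over $V(\mathcal{G})$, hence $\mathcal{O}(n)$ time and $\mathcal{O}(n)$ extra space.

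The bulk of the work sits in the loop over target nodes (Lines~$7$ to~$18$), so I would analyse a single iteration and then multiply by $\vert \mathcal{D} \vert$. For a fixed target $u_i$: building the depth\mbox{-}$h$ breadth\mbox{-}first\mbox{-}search tree rooted at $u_i$ visits $\mathcal{O}(d^{h+1})$ vertices and edges, so it costs $\mathcal{O}(d^{h+1})$; evaluating $\mathcal{P}_{w \rightarrow u_i}$ for every $w \in \mathcal{N}^{h}(u_i)$ via the recursion of Equation~(\ref{Eq:7}) and testing each against the cut\mbox{-}off $\alpha$ costs $\mathcal{O}(h d^{h})$ when the recursion is unrolled level\mbox{-}by\mbox{-}level over that tree as in \cite{tang2017influence}; and, in the worst case in which every $h$\mbox{-}hop neighbour turns out to be effective, the contribution update of Lines~$15$ to~$17$ runs over $\mathcal{O}(d^{h+1})$ nodes. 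Summing these for one target and then over all of $\mathcal{D}$ yields $\mathcal{O}\big(\vert \mathcal{D} \vert\, d^{h+1}(h d^{h} + d^{h+1})\big)$ for this phase. Finally, the normalization by selection cost (Lines~$19$ to~$21$) is $\mathcal{O}(n)$, sorting $V(\mathcal{G})$ by the $\mathcal{EB}$ value (Line~$22$) is $\mathcal{O}(n \log n)$, and the greedy scan of the sorted list in the \textbf{while} loop advances the index $i$ at most $n$ times, hence $\mathcal{O}(n)$. Adding all contributions and absorbing the $\mathcal{O}(n)$ terms into $\mathcal{O}(n \log n)$ gives the claimed time bound $\mathcal{O}\big(n \log n + \vert \mathcal{D} \vert\, d^{h+1}(h d^{h} + d^{h+1})\big)$.

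For the space bound I would note that, in addition to the storage of the input graph, the only auxiliary structures are the $\mathcal{EB}$ vector ($\mathcal{O}(n)$), the current breadth\mbox{-}first\mbox{-}search tree together with the stored probabilities and the effective\mbox{-}neighbour list of the target node being processed ($\mathcal{O}(d^{h+1})$, since these are recomputed and released per target node), and the seed set $\mathcal{S}$ ($\mathcal{O}(\vert \mathcal{S} \vert) = \mathcal{O}(n)$). Their sum is $\mathcal{O}(n + d^{h+1})$, as claimed.

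I expect the one genuinely delicate point to be the $\mathcal{O}(h d^{h})$ estimate for the batch of probability computations governed by Equation~(\ref{Eq:7}): a naive, independent unfolding of that recursion for each of the $\mathcal{O}(d^{h})$ boundary vertices could blow up exponentially in $h$, so the argument must rely on sharing sub\mbox{-}computations across the breadth\mbox{-}first\mbox{-}search tree (a level\mbox{-}by\mbox{-}level dynamic program, exactly the scheme of \cite{tang2017influence}); everything else is routine bookkeeping. I would also flag that the worst\mbox{-}case identification $\mathcal{N}^{h}_{E}(u_i) = \mathcal{N}^{h}(u_i)$ is what forces the $d^{h+1}$ factor in the update step, whereas in practice $\mathcal{N}^{h}_{E}(u_i)$ is much smaller.
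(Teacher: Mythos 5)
Your accounting mirrors the paper's own proof essentially line for line: $\mathcal{O}(n)$ initialization, a per\mbox{-}target BFS ball of size $\mathcal{O}(d^{h+1})$ under the $d$\mbox{-}regular assumption, $\mathcal{O}(h d^{h})$ for the Equation~(\ref{Eq:7}) probability computations, a worst\mbox{-}case update over all $h$\mbox{-}hop neighbours, then the $\mathcal{O}(n)$ normalization, $\mathcal{O}(n \log n)$ sort and linear greedy scan, and the same three\mbox{-}term space tally $\mathcal{O}(n)+\mathcal{O}(d^{h+1})+\mathcal{O}(\vert\mathcal{S}\vert)$. The only (harmless) deviation is that your additive per\mbox{-}target sum actually yields the tighter $\mathcal{O}(\vert\mathcal{D}\vert\,(h d^{h}+d^{h+1}))$ for the middle phase, which you then relax to the stated product form $\mathcal{O}(\vert\mathcal{D}\vert\, d^{h+1}(h d^{h}+d^{h+1}))$, so the theorem's bound still follows.
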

\section{Experimental Evaluation} \label{Sec:EE}
In this section, we report the experimental evaluation of our proposed methodologies. Initially, we start with a brief description of the datasets.
\subsection{Dataset Description} \label{Sec:DATASET}
In our experiments, we use the following four publicly available social network datasets. 
\begin{itemize}
\item \textbf{Email\mbox{-}Eu\mbox{-}core network Dataset} \footnote{\url{http://snap.stanford.edu/data/email-Eu-core.html}} \cite{yin2017local}, \cite{leskovec2007graph}: The network is generated based on the e\mbox{-}mail exchanges among different departments from a large European research institution. There is an edge between the users $u_i$ and $u_j$, if there is an e\mbox{-}mail exchange between them.
\item \textbf{Facebook Network Dataset} \footnote{\url{http://snap.stanford.edu/data/ego-Facebook.html}} \cite{leskovec2012learning}: This dataset was collected from survey participants using a Facebook app. Each user of the network is represented by a node, and two vertices are connected by an edge, if the corresponding users are friend of each other in Facebook.
\item \textbf{Physics Network Dataset} \footnote{\url{https://arxiv.org/}}: This is an academic collaboration network among the researchers of physics section of \url{arxiv.org}. Two users are connected by an undirected edge, if they are co\mbox{-}author in atleast one paper.
\item \textbf{Epinions} \footnote{\url{http://www.epinions.com/?sb=1}} \cite{richardson2003trust}: This is a who\mbox{-}trust\mbox{-}whom on\mbox{-}line social network of a general consumer review site: \url{Epinions.com}. There is a directed edge from the user $u_i$ to $u_j$, if the user $u_i$ trusts $u_j$.
\end{itemize}
Among them, the first, second and fourth one are downloaded from \textit{Stanford Social Network Analysis} \url{http://snap.stanford.edu/data/index.html} and the third one is from \url{https://www.microsoft.com/en-us/research/people/weic/#!selected-projects}. Here, we give a brief description of each of the datasets.These datasets have been extensively used in social influence maximization research \cite{chen2010scalable}. Table \ref{Tab:2} gives a basic statistics of the described datasets.
\begin{center}
\begin{table}[H]
\caption{Basic statistics of the datasets.}

\centering
\begin{tabular}{|l|l|l|l|l|}
 \hline
\textit{Dataset Name} & $\vert V(G) \vert$ & $\vert E(G) \vert$ & \textit{Avg Deg} & \textit{Avg Clus Coeff} \\
\hline
Email-Eu-core network & 1005 & 25571 & 25.443 & 0.3994 \\
\hline
Facebook Dataset & 4039 & 88234 & 43.6910 & 0.6055 \\
\hline
PHY Network & 37154 & 231584 & 12.466 & 0.2371\\
\hline
Epinions & 75879 & 508837 & 15.6345 & 0.1378 \\
\hline
\end{tabular}
\label{Tab:2}
\end{table}
\end{center}  
\subsection{Parameter Settings}
\subsubsection{Diffusion Probability} In this paper, we consider the following two  diffusion probability settings. 
\begin{itemize}
\item \textit{Uniform Setting}: In this setting, $\forall (u_iu_j) \in E(\mathcal{G})$, $\mathcal{P}_{u_i \rightarrow u_j}=p_{c}$ and $p_c \in (0,1]$. We set the value of $p_c$ as $0.1$.  This value has also been used in the literature, in many studies \cite{goyal2011celf++}.
\item \textit{Trivalency Setting}: In this setting, each edge is assigned diffusion probability uniformly at random from the set $\{0.1, 0.01, 0.001\}$. 
\end{itemize}

\par On the other hand, 
\subsubsection{Target Nodes}
In this study, we select $20\%$ of the nodes as target nodes and they are chosen uniformly at random. We adopt this settings from \cite{nguyen2017billion}.  
\subsubsection{Cost and Benefit} 
In this study, we follow two different settings.
\begin{itemize}
\item First one is the random setting, where the selection cost of the nodes and the earned benefit of the target nodes are selected from the intervals $[1,50]$ and $[50,100]$, respectively, uniformly at random. We adopt this setting from \cite{nguyen2013budgeted} and call it the random setting.
\item Secondly, the influence ability of a node is directly proportional to its degree. Naturally, the selection cost of a node should be proportional to its degree. We adopt another settings from \cite{nguyen2017billion}. By this setting, we compute the selection cost of the node $u_i$ is computed as 
\begin{equation}
\mathcal{C}(u_i)= \frac{n \ deg(u_i)}{2.m},
\end{equation}
and in this case, the benefit of each target node is considered as $1$. We call this setting as the `degree proportional' setting.
\end{itemize}  
\subsubsection{Budget} In our study, based on the two different cost assignment settings, we adopt two different budget settings. In case of random setting, we consider the budget values starting from $2000$ continued till $16,000$, and each time is incremented by $2,000$. In the `degree proportional setting', we start with the budget value of $100$ and continued until $800$ with a gap of $100$.
\subsubsection{Hop Count and Cut Off Probability}
In Algorithm \ref{Algo:4}, we use a hop count $h$ and cut off probability $\alpha$ for computing the effective nodes. In our experiments, we choose the value of $h$ as $2$ and the value of $\alpha$ as $0.1$. We adopt these settings from \cite{tang2017influence}. 
\subsection{Algorithms in the Experiment}
Here, we have listed out the algorithms that we have listed out for the  experimentation.
\subsubsection{Algorithms proposed in this paper}
\begin{itemize}
\item \textbf{Incremental Greedy Approach with Approximation Guarantee} (IGAAG): This is basically the Algorithm \ref{Algo:1a} of this paper, which returns either the set of nodes chosen incrementally by Algorithm \ref{Algo:1} or the node that causes maximum individual benefit gain.
\item \textbf{Incremental Greedy Approach with Improved Performance} (IGAIP):
This is the Algorithm \ref{Algo:3} of this paper, which improves the Algorithm \ref{Algo:1a} by exploiting the sub\mbox{-}modularity property of the benefit function.
\item \textbf{Hop\mbox{-}Based Heuristic} (HBH): This is the Algorithm \ref{Algo:4} of this paper, which works based on the computation of expected earned benefit of the nodes that are within the $h$\mbox{-}hop (for a given value of $h$) of the target nodes.
\end{itemize}
\subsubsection{Baseline Algorithms}
\begin{itemize}
\item \textbf{Maximum Degree Heuristic (Max\_DEG)}: In this method, the maximum degree nodes within the budget is returned as the seed set. This method has been used in previous studies as well \cite{kempe2003maximizing}.
\item \textbf{Degree Discount Heuristic (DEG\_DIS)}: This is a popular heuristic for the SIM Problem proposed by Chen et al. \cite{chen2009efficient}. In this heuristic, if $u$ is a seed node and $(uv) \in E(\mathcal{G})$, then the degree of $v$ will be discounted by $2t_{v}+(d_{v}-t_{v})t_{v}\mathcal{P}_{u \rightarrow v}$, where $t_{v}$ is the number of neighbors of $v$ currently in the seed set, and $d_{v}$ is the degree of $v$. This method has been used in many previous studies \cite{jiang2011simulated}.
\item \textbf{Single Discount Heuristic (SIN\_DEG)}: This a variant of degree discount heuristic proposed by Chen et al. \cite{chen2009efficient}. In this heuristic, if $u$ is a seed node and $(uv) \in E(\mathcal{G})$, then the degree of $v$ will be discounted by $1$. This method has been used as a baseline in previous studies \cite{cao2011maximizing} \cite{bucur2016influence}.
\item \textbf{Prefix excluded Maximum Influence Arbarence (PMIA)}: This is one of the state\mbox{-}of\mbox{-}the\mbox{-}art and popular heuristic for influence maximization problem proposed by Chen et al. \cite{chen2010scalable} \cite{wang2012scalable}. 
\item \textbf{ComPBRA}: This is a recently developed community\mbox{-}based solution framework for the EBM Problem developed by Banerjee et al. \cite{banerjee2019maximizin}.
\end{itemize}
All the algorithms have been implemented in `Python 3.4' along with `NetworkX 1.9.1'. We have carried out all the experiments in a high performance computing cluster with $5$ nodes and each of them having $64$ cores and $64 \ GB$ of RAM running in Centos $6.7$ environment. As, the Algorithm \ref{Algo:1a} (IGAAG) is quite inefficient, we don't execute this on the larger datasets (e.g., Physics Network Dataset, Epinions).
\subsection{Experimental Results and Discussion}
The main goal of our experimentation is to make a comparative study of the proposed as well as baseline methods in terms of performance. It is measured as the amount of earned benefit obtained by influencing the target users due to the initial activation of the seed nodes selected by different algorithms. We also report the computational time requirement by different algorithms for selecting the seed sets. 
\subsubsection{Performance on Earned Benefit}

First row of Figure \ref{Fig:Results} shows the budget vs. earned benefit plot for the `email-Eu-core' dataset. Based on the random and degree proportional setting, the maximum benefit that can be earned is $13912$ and $179$, respectively. From the results, it is observed that there is a gap in the earned benefit between the existing methods and the methods proposed in this paper. The gap is even significant in tri\mbox{-}valency setting compared to the uniform setting. As an example, in uniform setting ($p_c=0.1$) with random cost and benefit assignment for $\mathcal{B}=16000$, among the existing methods from the literature, the seed set selected by ComPBRA leads to more earned benefit and the amount is $12231$, which is $73.34 \%$ of the maximum possible. On the other hand, among the proposed methodologies, the seed set selected by the IGAAG leads to more amount of earned benefit $13912$, which is $91.37\%$ of maximum possible. In degree proportional setting, for $\mathcal{B}=16000$, in tri\mbox{-}valency setting among the existing methods, the seed set selected by both PMIA and ComPBRA leads to the earned benefit of $167$ ($83.5 \%$ of the maximum possible), whereas the same for both IGAAG and IGAIP is $172$ ($86 \%$ of the maximum possible).

Next, we report the results for the `Facebook' dataset in the second row of Figure \ref{Fig:Results}. In this dataset also, we observe that the seed set selected by the proposed methodologies leads to more earned benefit compared to the existing methods. As an example, when the budget value is $16000$, under random cost and benefit with tri\mbox{-}valency setting, among the existing methods the seed set selected by ComPBRA leads to the earned benefit of $15578$. However, the same due to the seed set selected by the proposed hop\mbox{-}based heuristic is $20450$, which is almost $31 \%$ more. Now, under the degree proportional cost and tri\mbox{-}valency setting, when the budget value is $16000$, among the existing methods the earned benefit due to the seed set by ComPBRA is $328$, and the same by the hop\mbox{-}based heuristics is $426$.
\begin{figure*}
\centering
\begin{tabular}{cccc}
\includegraphics[height=4 cm, width=4.5 cm]{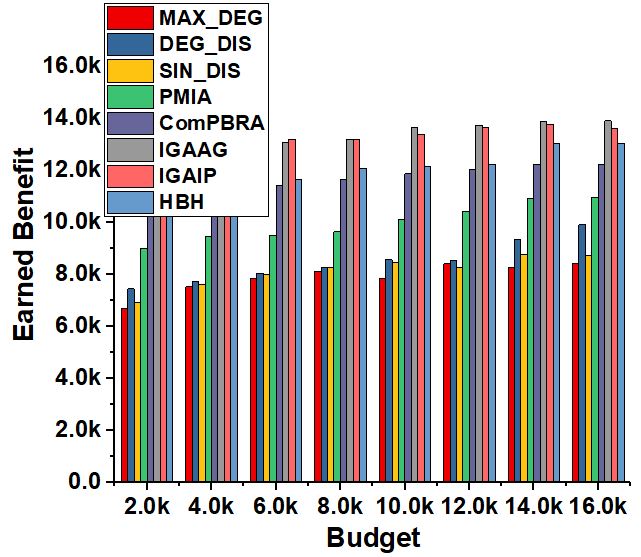} & \includegraphics[height=4 cm, width=4.5 cm]{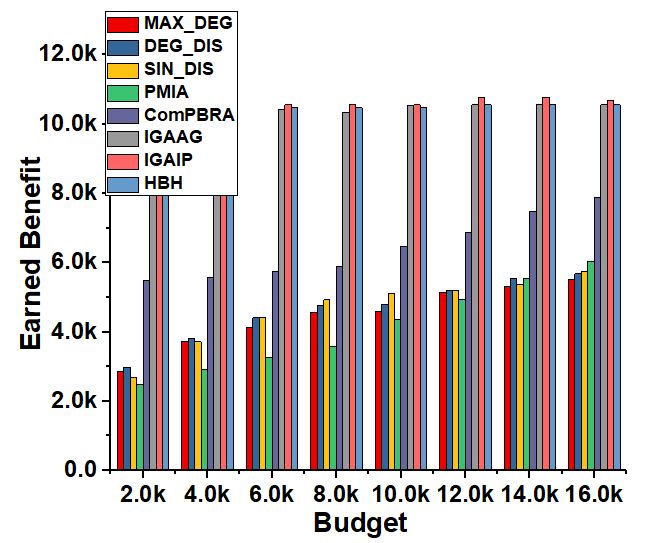} & \includegraphics[height=4 cm, width=4.5 cm]{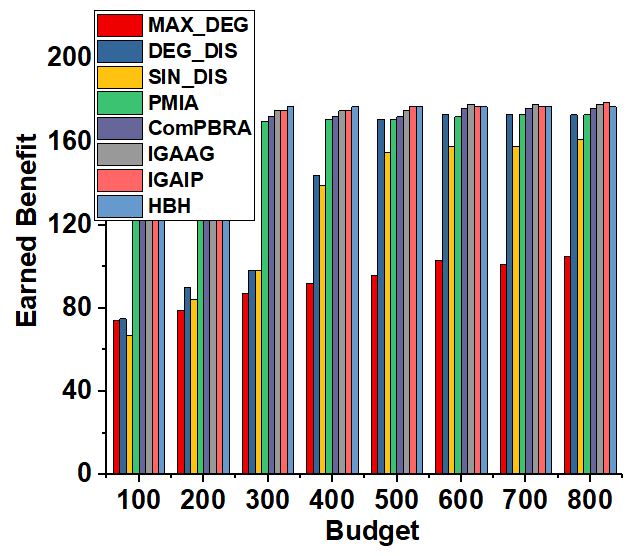} & \includegraphics[height=4 cm, width=4.5 cm]{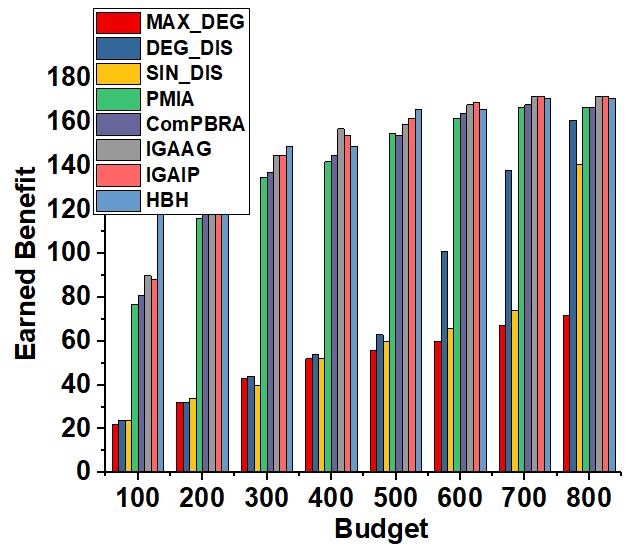} \\
(1UR)    & (1TR)  & (1UD)   & (1TD)  \\

\includegraphics[height=4 cm, width=4.5 cm]{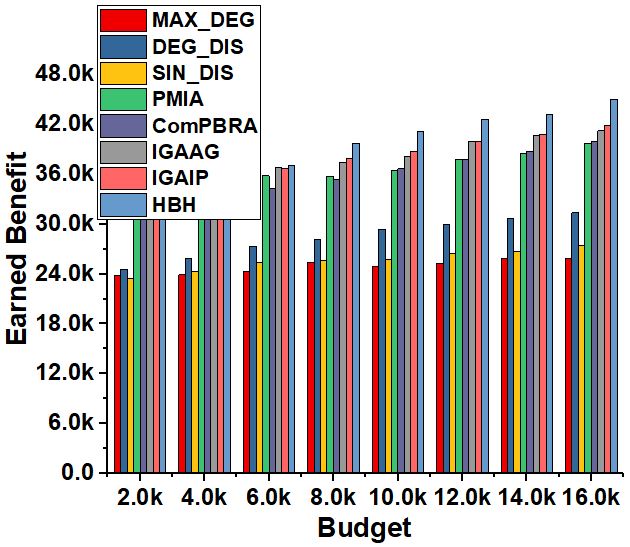} & \includegraphics[height=4 cm, width=4.5 cm]{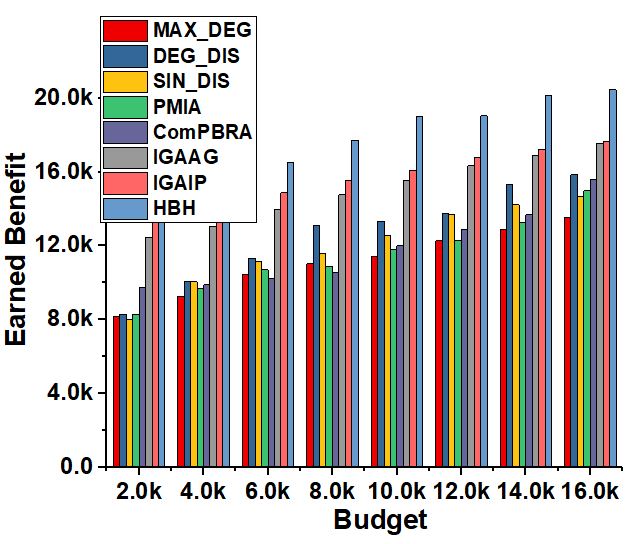} & \includegraphics[height=4 cm, width=4.5 cm]{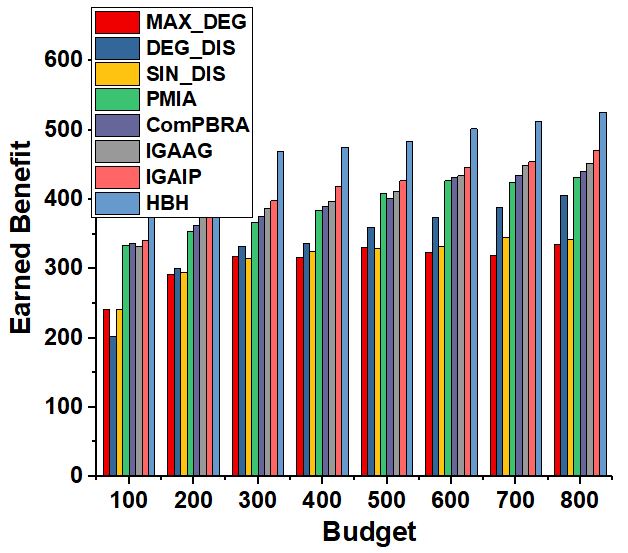} & \includegraphics[height=4 cm, width=4.5 cm]{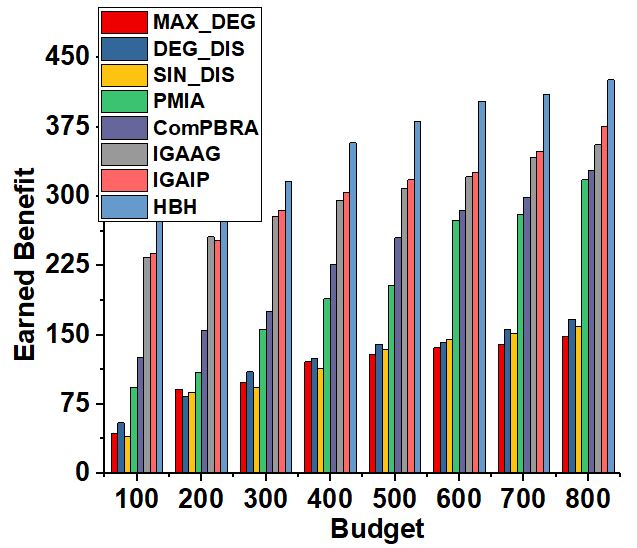} \\
(2UR)    & (2TR)  & (2UD)  & (2TD) \\

\includegraphics[height=4 cm, width=4.5 cm]{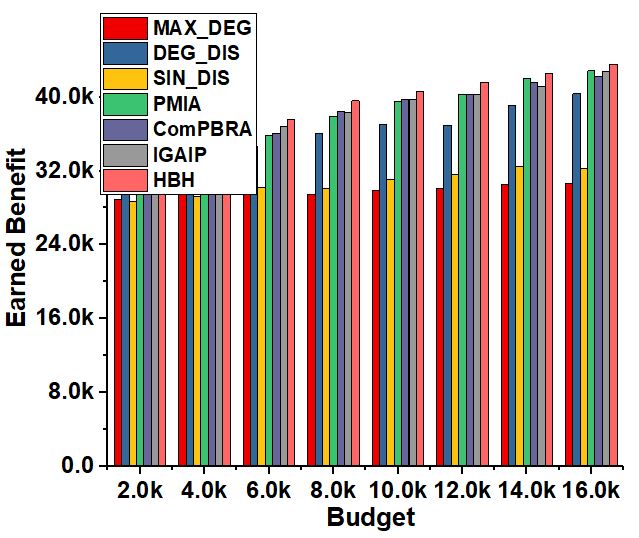} & \includegraphics[height=4 cm, width=4.5 cm]{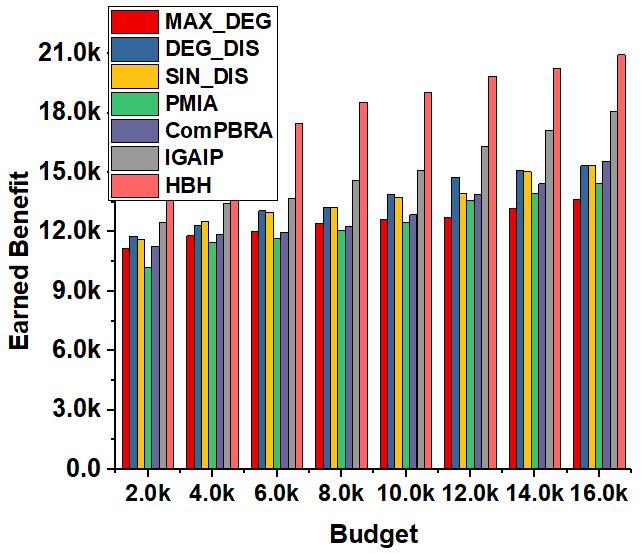} & \includegraphics[height=4 cm, width=4.5 cm]{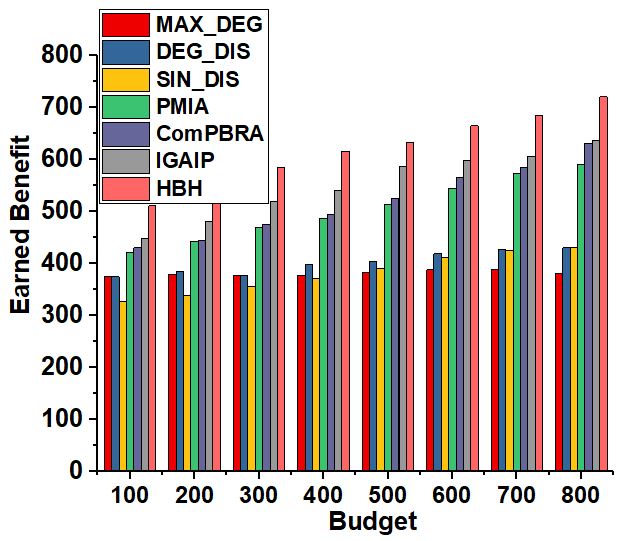} & \includegraphics[height=4 cm, width=4.5 cm]{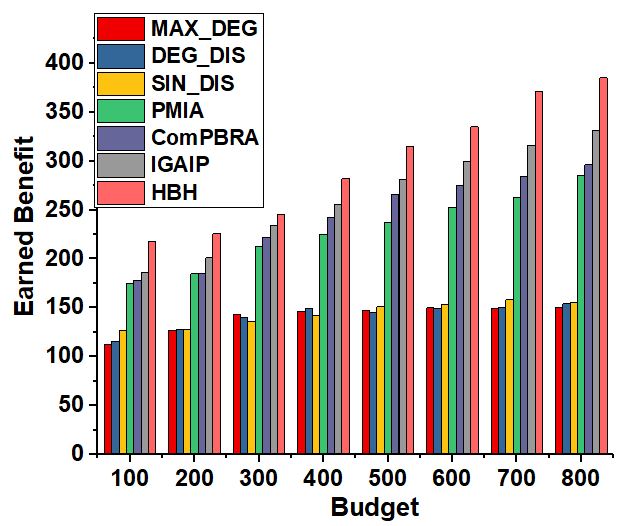} \\
(3UR) & (3TR) & (3UD)  & (3TD) \\

\includegraphics[height=4 cm, width=4.5 cm]{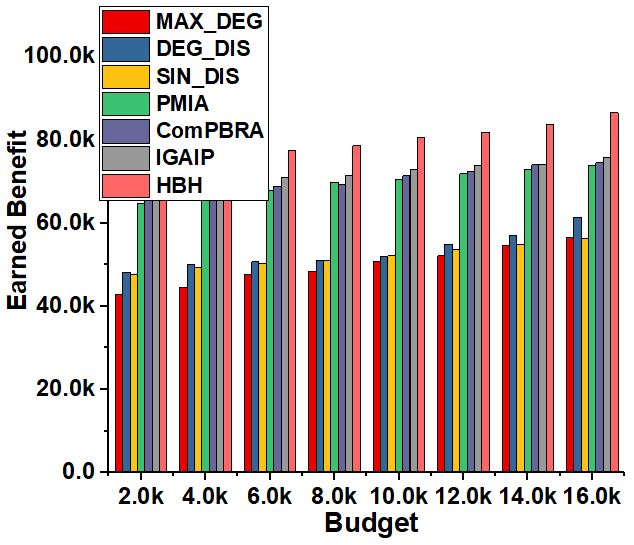} & \includegraphics[height=4 cm, width=4.5 cm]{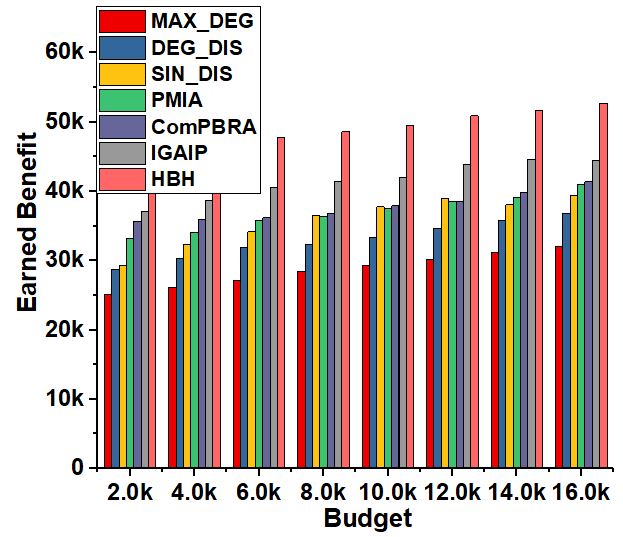} & \includegraphics[height=4 cm, width=4.5 cm]{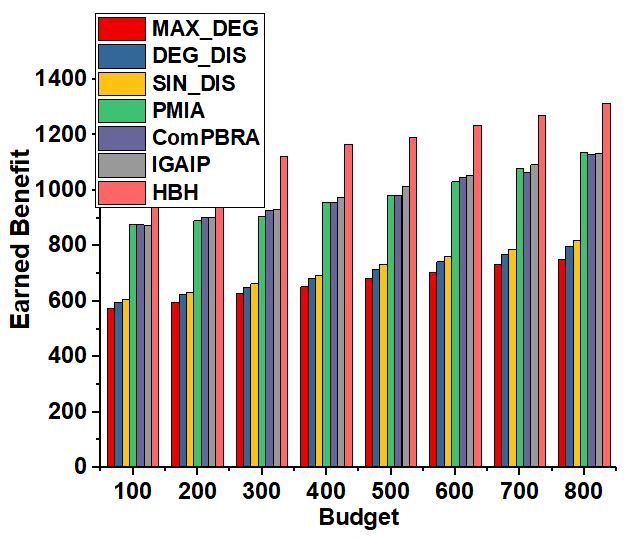} & \includegraphics[height=4 cm, width=4.5 cm]{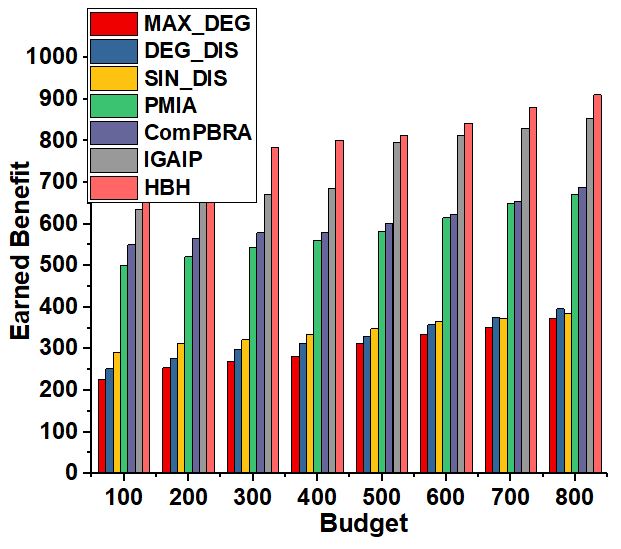} \\
(4UR)  & (4TR)  & (4UD)  & (4TD)  \\
\end{tabular}
\caption{Budget vs. Earned Benefit Plots for different datasets. In the individual figure captions 1, 2, 3, 4 denotes the datasets in which they have been described in Section \ref{Sec:DATASET}. U and T denotes the uniform and trivalency probability setting. R and D denotes random and degree proportional cost setting.}
\label{Fig:Results}
\end{figure*}
Next, we show the results for the `Physics Collaboration Network' dataset in the third row of Figure \ref{Fig:Results}. In this dataset also, we observe a significant gap in the earned benefit between the existing methods and the methods proposed in the literature. The gap is more in case of tri\mbox{-}valency setting. As an example, for $\mathcal{B}=16000$, under random cost with uniform influence probability setting, among the existing methods, the seed set selected by PMIA leads to maximum amount of earned benefit which is $42817$ and the same by the hop\mbox{-}based heuristic is $43568$. In tri\mbox{-}valency setting, for $\mathcal{B}=16000$, the seed set selected by SIN\_DIS leads to the earned benefit, which is equal to $15352$, and the same obtained by hop\mbox{-}based heuristic is $20955$. This is approximately $36\%$ more compared to the SIN\_DIS method. 

Next, we report the results for the `Epinions' dataset in the last row of Figure \ref{Fig:Results}. In this dataset also, we observe a significant difference between the earned benefit due to the seed sets selected by the baseline methods and the methods proposed in this paper. As an example, for the highest budget ($\mathcal{B}=16000$), in uniform setting under random cost and benefit assignment seed set selected by the ComPBRA leads to the earned benefit of worth $74561$, and the same in case of the `hop\mbox{-}based heuristic' is $86564$, which is almost $16 \%$ more compared to the ComPBRA. Similarly, in degree proportional setting under tri\mbox{-}valency diffusion model, the seed set selected by the ComPBRA leads to the earned benefit of amount $688$ and the same for the `hop\mbox{-}based heuristic' is $911$, which is almost $32\%$ more compared to the ComPBRA.

From the results, it is observed that the seed set selected by the proposed methodologies can lead to more amount of earned benefit compared to the existing methods considered in this paper. Next, we report the computational time of the proposed and baseline methods.
\subsubsection{Computational Time}
Table \ref{Tab:3} reports the computational time required for selecting the seed sets by different methodologies. From the reported results, it is observed that though the IGAAG method can achieve an approximation guarantee,  the computational time requirement is quite impractical. However, the IGAIP method overcomes this issue by improving it upto $220$ times faster compared to IGAAG. However, the hop\mbox{-}based heuristic is much more efficient and also scalable compared to both IGAAG and IGAIP, while achieving the almost similar amount of earned benefit, in some instances even more. 
\par Among the baseline methods, the MAX\_DEG is the fastest one, as it returns the high degree nodes within the budget. The DEG\_DIS and SIN\_DIS methods take more time compared to the MAX\_DEG method. Among the existing methods, PMIA is seen to be the fastest.
\par Now, in real\mbox{-}life applications of this problem, such as `computational advertisement', `viral marketing' etc. from the advertisers point of view, the main priority will be the earned benefit. However, the methodology used for seed set selection purpose should be able to perform this task with a reasonable computational time. From the experimental evaluation, it is established that among the proposed methodologies, the hop\mbox{-}based heuristic is far ahead compared to the existing methods.     
\begin{table}
\caption{Computational Time Requirement for the Proposed as well as Baseline Methods}
\label{Tab:3}
\resizebox{.915\hsize}{!}{
		\begin{tabular}{|c| c|| c|c|c|c|c|c|c | c|}
			\hline
			\multirow{2}{*}{Dataset} & \multirow{2}{*}{Budget} & \multicolumn{8}{c||}{Computational Time (in Seconds)} \\
			\cline{3-10}
			 &  & IGAAG & IGAIP & HBH & MAX\_DEG & DEG\_DIS & SIN\_DIS & PMIA & ComPBRA \\
			\hline
			\hline
			\multirow{8}{*}{\textbf{Email}}&2000  & 6.2351 $\times 10^{3}$ & 35.5362 & 0.0614 & 0.0253 & 0.0293 & 0.2825 & 0.2671 & 0.1667 \\
			
			& 4000 & 6.4995 $\times 10^{3}$ & 61.2136 & 0.2222 & 0.0269 & 0.0358 & 0.4567 & 0.4988 & 0.3911 \\
			
			& 6000 & 6.4995 $\times 10^{3}$ & 60.8463 & 0.2327 & 0.0294 & 0.0459 & 0.2289 & 1.0202 & 0.6006 \\
			
			& 8000 & 6.6856 $\times 10^{3}$ & 63.1582 & 0.2826 & 0.0328 & 0.0830 & 0.5065 & 1.1021 & 0.7407 \\
			
			& 10000 & 6.8265 $\times 10^{3}$ & 66.7364 & 0.4451 & 0.0365 & 0.1216 & 0.4219 & 1.2865 & 0.9168 \\
			
			& 12000 & 7.0004 $\times 10^{3}$ & 79.1924 & 0.7923 & 0.0416 & 0.1461 & 0.6986 & 1.3761 & 1.0451 \\
			
			& 14000 & 7.3358 $\times 10^{3}$ & 94.2375 & 1.2280 & 0.0474 & 0.1669 & 0.5504 & 1.4902 & 1.1184 \\
			
			& 16000 & 7.5138 $\times 10^{3}$ & 110.7938 & 1.5740 & 0.0548 & 0.1872 & 0.5643 & 1.9567 & 1.2251 \\
			\hline
			\multirow{8}{*}{\textbf{Facebook}}& 2000 & 9.3518 $\times 10^{3}$ & 57.5381 & 0.5593 & 0.1270 & 0.1371 & 0.1779 & 0.5124 & 0.3252 \\
			
			& 4000 & 1.0031 $\times 10^{4}$ & 72.1473 & 2.7052 & 0.1291 & 0.1411 & 0.1801 & 0.6301 & 0.4236 \\
			
			& 6000 & 1.2436 $\times 10^{4}$ & 88.9735 & 2.8905 & 0.1351 & 0.1544 & 0.1521 & 0.9325 & 0.7095 \\
			
			& 8000 & 1.4835 $\times 10^{4}$ & 96.5408 & 3.5700 & 0.1426 & 0.1709 & 0.1596 & 1.4002 & 1.1797 \\
			
			& 10000 & 1.6124 $\times 10^{4}$ & 114.8327 & 4.5553 & 0.1495 & 0.1901 & 0.1937 & 1.7522 & 1.5479 \\
			
			& 12000 & 1.7831 $\times 10^{4}$ & 116.1186 & 14.8733 & 0.1566 & 0.2128 & 0.1770 & 3.8360 & 3.5625 \\
			
			& 14000 & 1.9149 $\times 10^{4}$ & 142.9568 & 19.1873 & 0.1654 & 0.2391 & 0.2164 & 5.1924 & 4.8920 \\
			
			& 16000 & 2.1285 $\times 10^{4}$ & 145.1749 & 8.2808 & 0.1742 & 0.2778 & 0.1981 & 13.1280 & 7.5171 \\
			\hline
						\multirow{8}{*}{\textbf{Physics}}& 2000 & - & 336.7986 & 1.5963 & 1.5816 & 1.6041 & 2.8480 & 7.2921 & 4.8043  \\
			
			& 4000 & - & 398.8845 & 1.9457 & 1.3813 & 1.6682 & 1.8261 & 6.4076 & 7.1599 \\
			
			& 6000 & - & 424.1447 & 2.6486 & 1.8345 & 1.4318 & 2.4793 & 8.0311 & 9.4225 \\
			
			& 8000 & - & 464.6438 & 3.3497 & 1.8320 & 1.9460 & 1.9653 & 11.5990 & 11.9076 \\
			
			& 10000 & - & 488.6438 & 5.5643 & 1.4763 & 2.0533 & 2.2705 & 16.5869 & 12.9141 \\
			
			& 12000 & - & 531.4116 & 10.3245 & 1.5406 & 0.8396 & 2.2997 & 16.7783 & 16.0735 \\
			
			& 14000 & - & 558.6329 & 15.6457 & 2.1607 & 1.9682 & 3.1089 & 18.4652 & 18.6298 \\
			
			& 16000 & - & 602.1542 & 17.8947 & 2.4451 & 2.3792 & 1.5399 & 23.2580 & 17.5282 \\
			\hline
						\multirow{8}{*}{\textbf{Epinions}}& 2000 & - & 751.3667 & 60.5972 & 8.7695 & 10.2764 & 11.8725 & 46.7235 & 44.2557 \\
			
			& 4000 & - & 789.6519 & 66.8945 & 8.7535 & 10.9163 & 11.5137 & 59.1578 & 58.7692 \\
			
			& 6000 & - & 797.6386 & 68.8924 & 9.0238 & 11.2865 & 11.4792 & 64.6349 & 57.1139 \\
			
			& 8000 & - & 812.9137 & 73.2648 & 9.1369 & 11.1869 & 12.3527 & 81.9739 &  59.1975\\
			
			& 10000 & - & 842.7459 & 76.5489 & 9.2759 & 12.0237 & 12.8573 & 86.7682 & 67.9834 \\
			
			& 12000 & - & 852.8564 & 82.4392 & 9.1349 & 12.9768 & 12.5737 & 81.2854 & 65.1158 \\
			
			& 14000 & - & 865.5867 & 78.3267 & 9.4672 & 12.5549 & 13.9136 & 80.4375 & 79.8859 \\
			
			& 16000 & - & 904.8127 & 86.5197 & 9.7959 & 13.7339 & 13.54879 & 82.1472 & 78.9657 \\

			\hline

		\end{tabular}
}
	\end{table}

\section{Conclusion and Future Direction}\label{Sec:CFD}
In this paper, we have studied the `Earned Benefit maximization problem', where a subset of nodes of the input social network are designated as target nodes and each of them is associated with a benefit value. Each node of the network is associated with a selection cost and the seed selection has to be done within an allocated budget with an aim to maximize the earned benefit. For this problem, we propose an $(1-\frac{1}{\sqrt{e}})$ factor approximation algorithm. By exploiting the sub\mbox{-}modularity of the benefit function, we improve the efficiency of this algorithm. To deal with the large scale social networks, we propose a hop\mbox{-}based heuristic solution for this problem. Reported results demonstrate that the seed set selected by the proposed methodologies leads to more amount of earned benefit compared to the existing methods. Now, this study can be extended in several directions. First of all, our study can be carry forwarded by considering the time varying nature of the real\mbox{-}world social networks. Secondly, the approximation bound that we have provided for our proposed methodology can be improved by more sophisticated analysis. Moreover, it will be interesting to come up with a game theoretic model of this problem.
\ifCLASSOPTIONcaptionsoff
  \newpage
\fi



\bibliographystyle{IEEEtran}
\bibliography{mybibfile}
\end{document}